\numberwithin{equation}{section}
\theoremstyle{plain}
\newtheorem*{thm*}{\protect\theoremname}
\theoremstyle{plain}
\newtheorem{thm}{\protect\theoremname}
\theoremstyle{plain}
\newtheorem{lem}[thm]{\protect\lemmaname}
\newtheorem*{asm}{Assumption}
\theoremstyle{definition}
\providecommand{\lemmaname}{Lemma}
\providecommand{\theoremname}{Theorem}
\providecommand{\lemmaname}{Lemma}
\providecommand{\theoremname}{Theorem}
\begin{document}
	
\begin{titlepage}
\title{
	\vspace{-1.6cm}
	\textbf{Estimation of Average Derivatives of Latent Regressors: With an Application to Inference on Buffer-Stock Saving}
	\thanks{\protect\doublespacing Financial supports from
		Bush Institute-SMU Economic Growth Initiative Research Grant and SMU
		University Research Council Grant are gratefully acknowledged.}
}
\author{
	Hao Dong \thanks{\protect\doublespacing Hao Dong. Department of Economics, Southern Methodist University, 3300 Dyer
		Street, Dallas, TX 75275, US. Email: haod@smu.edu. Phone: (214) 768-3288.}
	\and
	Yuya Sasaki \thanks{\protect\doublespacing Yuya Sasaki. Department of Economics, Vanderbilt University, 2301 Vanderbilt Place, Nashville, TN 37235-1819, US. Email: yuya.sasaki@vanderbilt.edu. Phone: (615) 343-3016.}
}

\date{\monthyeardate\today}

\maketitle

\begin{abstract}
This paper proposes a density-weighted average derivative estimator
based on two noisy measures of a latent regressor. Both measures have
classical errors with possibly asymmetric distributions. We show that
the proposed estimator achieves the root-$n$ rate of convergence,
and derive its asymptotic normal distribution for statistical inference.
Simulation studies demonstrate excellent small-sample performance
supporting the root-$n$ asymptotic normality. Based on the proposed
estimator, we construct a formal test on the sub-unity of the marginal
propensity to consume out of permanent income (MPCP) under a nonparametric
consumption model and a permanent-transitory model of income dynamics
with nonparametric distribution. Applying the test to four recent
waves of U.S. Panel Study of Income Dynamics (PSID), we reject the
null hypothesis of the unit MPCP in favor of a sub-unit MPCP, supporting
the buffer-stock model of saving. 

\vspace{.15in}
\noindent\textbf{Keywords:} Average derivative, latent variables, income dynamics, consumption \\
\noindent\textbf{JEL Codes:} C14, C23, D31

\end{abstract}

\end{titlepage}

\section{Introduction \label{sec:introduction} }

Rational forward-looking agents should have a unit marginal propensity
to consume out of permanent shocks in income (MPCP). Carroll (2009)
demonstrates that the MPCP is strictly less than one in the context
of the buffer-stock model where inpatient consumers have a standard
precautionary saving motive, and further shows through simulations
across a wide range of structural assumptions that the MPCP ranges
from 0.75 to 0.92. Thus, differentiating between the standard model
and Carroll's buffer-stock model can be accomplished by accessing
whether the MPCP is strictly less than one.

When we take this testable implication of the buffer-stock model to
statistical inference based on empirical data, we encounter a fundamental
issue. Namely, we do not observe the permanent income in data. If
we could observe the permanent income shock $X^{\ast}$, as well as
the consumption growth $Y$, then a statistical inference about the
theory of Carroll boils down to inference about the nonparametric
regression function $g(\cdot)=E[Y|X^{\ast}=\cdot]$. Specifically,
rejecting the null hypothesis that a (weighted) average derivative
of $g$ is greater than or equal to one against the alternative that
it is strictly less than one will provide a statistical support for
the theory of Carroll. Under the unobservability of the permanent
income shock $X^{\ast}$ in data, however, the existing econometric
methods of estimation and inference for (weighted) average derivatives
do not apply. To overcome this, we develop a novel method and theory
of estimation and inference for weighted average derivatives when
the latent regressor $X^{\ast}$ is unobserved, but two noisy measures
of $X^{\ast}$ are available in data, as is the case with the standard
permanent-transitory models of earnings and income dynamics.

This paper, in terms of its technical aspects, belongs to the vast literature
on measurement error models and deconvolution. See books by Carroll,
Ruppert, Stefanski and Crainiceanu (2006), Meister (2009) and Horowitz
(2009) and surveys by Chen, Hong and Nekipelov (2011), Schennach (2016)
and Schennach (2021) for reviews. The literature on deconvolution
started out with the deconvolution kernel density methods under known
error distributions (Carroll and Hall, 1988; Stefanski and Carroll,
1990; Fan, 1991a,b; Bissantz, Dümbgen, Holzmann and Munk, 2007; Bissantz
and Holzmann, 2008; van Es and Gugushvili, 2008; Lounici and Nickl,
2011; Schmidt-Hieber, Munk and Dümbgen, 2013), followed by those under
unknown error distributions (Diggle and Hall, 1993; Horowitz and Markatou,
1996; Neumann and Hössjer, 1997; Efromovich, 1997; Li and Vuong, 1998;
Delaigle, Hall and Meister, 2008; Johannes, 2009; Comte and Lacour,
2011; Kato and Sasaki, 2018; Kato, Sasaki and Ura, 2021). Among the
latter set of papers, Horowitz and Markatou (1996) and Delaigle, Hall
and Meister (2008) use repeated measurements with symmetrically and
identically distributed errors, while Li and Vuong (1998) propose
an alternative estimator based on Kotlarski's lemma (Kotlarski, 1967)
that does not require known error distribution -- also see Bohnomme
and Robin (2010) and Comte and Kappus (2015). Also related is Adusumilli,
Kurisu, Otsu and Whang (2020) who studies distribution function instead
of density function.

These deconvolution kernel density methods extend to methods for nonparametric
errors-in-variables regression. Fan and Truong (1993) and Fan and
Masry (1992) study Nadaraya-Watson estimator under known error distribution,
followed by extensions by Delaigle and Meister (2007), Delaigle, Fan
and Carroll (2009) and Delaigle, Hall and Jamshidi (2015). Delaigle,
Hall and Meister (2008), Adusumilli and Otsu (2018) and Kato and Sasaki
(2019) consider cases of unknown error distribution with symmetrically
and identically distributed errors, while Li (2002), Schennach (2004),
Schennach, White and Chalak (2012), Schennach and Hu (2013) and Hu
and Sasaki (2015) consider cases of unknown error distribution with
repeated measurements. Fan (1995) and Dong, Otsu and Taylor (2021)
study average derivatives of nonparametric errors-in-variables regression
under known error distribution and unknown symmetric error distribution,
respectively. The current paper is closely related to the last two
references in that we are also interested in $\sqrt{n}$-consistent
estimation and inference for average derivatives for the purpose of
the aforementioned statistical inference about the hypothesis of buffer-stock
saving. However, unlike Fan (1995) or Dong, Otsu and Taylor (2021),
we allow for the unknown error distribution to be non-symmetric, in
light of the recent empirical reports that components of earnings
and income have skewed distributions (e.g., Bonhomme and Robin, 2010;
Guvenen, Ozkan and Song, 2014; Hu, Moffitt and Sasaki, 2019; Guvenen,
Karahan, Ozkan and Song, 2021).

Despite the extensive econometric and statistical literature on deconvolution
as summarized in the prior two paragraphs, none of the existing papers
to the best of our knowledge can conduct statistical inference for
the MPCP under an unknown and possibly asymmetric error distribution,
even though the skewness has been reported to be very likely by recent
empirical studies on earnings and income dynamics. Motivated by the
aforementioned economic question concerning the MPCP, therefore, this
paper fills this important void in the deconvolution literature by
proposing novel methods of estimation and inference for average derivatives
of latent regressors whose error distribution can be both unknown
and non-symmetric.

Regarding the application to the inference on buffer-stock saving
motivated by Carroll (2009), there are a couple of related existing
papers. In particular, Blundell, Pistaferri and Preston (2008) and
Arellano, Blundell and Bonhomme (2017) investigate MPCP using empirical
data. Blundell, Pistaferri and Preston (2008) use a linear parametric
consumption model and conduct inference on the parameter that represents
a constant MPCP. In contrast, we use a nonparametric consumption model
with a possibly non-constant MPCP. Arellano, Blundell and Bonhomme
(2017) use a nonlinear model of income dynamics, although their identification
and estimation approach \textit{per se} does not lead to statistical
inference on MPCP. In contrast, at the cost of linear model of income
dynamics, our proposed approach allows for statistical inference on
MPCP under a nonparametric consumption model, and thus nonparametrically
enables hypothesis testing about buffer-stock saving. In this way,
the novel method proposed in this paper complements the existing literature
on the empirical analysis of MPCP.

\section{Methodology \label{sec:methodology}}

\subsection{Motivation and overview \label{subsec:motivation}}

Consider the permanent-transitory model of income dynamics 
\begin{align*}
\iota_{jt} & =\pi_{jt}+\tau_{jt}\\
\pi_{jt} & =\pi_{jt-1}+\eta_{jt}
\end{align*}
where $\iota_{jt}$, $\pi_{jt}$, $\tau_{jt}$ and $\eta_{jt}$ denote
observed log income, latent log permanent income, latent log transitory
income and latent permanent income shock, respectively, of individual
$j$ in year $t$. Under this setup, 
\begin{align}
\underbrace{\iota_{jt}-\iota_{jt-2}}_{=:X_{j}} & =\underbrace{\eta_{jt}}_{=:X_{j}^{\ast}}+\underbrace{\eta_{jt-1}+\tau_{jt}-\tau_{jt-2}}_{=:\epsilon_{j}}\qquad\text{and}\label{eq:income:X}\\
\underbrace{\iota_{jt+1}-\iota_{jt-1}}_{=:W_{j}} & =\underbrace{\eta_{jt}}_{=:X_{j}^{\ast}}+\underbrace{\eta_{jt+1}+\tau_{jt+1}-\tau_{jt-1}}_{=:\nu_{j}}\label{eq:income:W}
\end{align}
hold, where $(X_{j},W_{j})$ is observed and $(X_{j}^{\ast},\epsilon_{j},\nu_{j})$
is unobserved by a researcher. Although not necessary, if we assume
that the shocks, $\eta_{jt-1}$, $\eta_{jt}$, $\eta_{jt+1}$, $\tau_{jt-2}$,
$\tau_{jt-1}$, $\tau_{jt}$, and $\tau_{jt+1}$, are mutually independent,
then the latent components $X_{j}^{\ast}$, $\epsilon_{j}$, and $\nu_{j}$
are also mutually independent.\footnote{\protect\doublespacing Even with more periods of data, if we wish errors like $\epsilon_{j}$
and $\nu_{j}$ to keep sharing no common component, due to the structure
of the permanent-transitory model of income dynamics, we can have
no more than two noisy measures of $X_{j}^{*}$. Hence, the repeated
measurements setting here is fundamental.}

Let $Y_{j}$ denote consumption growth $C_{jt}-C_{jt-1}$ from year
$t-1$ to year $t$. Consider the nonparametric regression function
$g(\cdot)=E[Y_{j}|X_{j}^{\ast}=\cdot]$. The derivative $g^{\prime}$
of this function quantifies the MPCP introduced in Section \ref{sec:introduction}.
The theory of Carroll (2009) is that the buffer-stock model that arises
under inpatient consumers with a standard precautionary saving motive
implies $g^{\prime}<1$ rather than $g^{\prime}=1$. This testable
implication leads to the null and alternative hypotheses 
\begin{align}
H_{0}:\theta_{1}\ge0\qquad\text{and}\qquad H_{1}:\theta_{1}<0,\label{eq:hypotheses}
\end{align}
where $\theta_{1}=E[\{g^{\prime}(X^{\ast})-1\}f(X^{\ast})]$ and $f$
is the density of $X^{*}$. Rejection of $H_{0}$ in favor of $H_{1}$
implies that there is at least some location $x^{\ast}$ in the support
of $X^{*}$ such that $g^{\prime}(x^{\ast})<1$, thus providing statistical
support for the theory of Carroll (2009). 

As we formally present in Section \ref{subsec:estimator} ahead, we
propose to estimate $\theta_{1}$ by
\begin{align}
\hat{\theta}_{1} & =-\frac{2}{n^{2}b_{n}^{3}}\sum_{j=1}^{n}\sum_{k=1}^{n}(Y_{j}-W_{j})\int\mathbb{\hat{K}}\left(\frac{x-X_{j}}{b_{n}}\right)\hat{\mathbb{K}}^{\prime}\left(\frac{x-X_{k}}{b_{n}}\right)dx,\label{eq:thetahat}
\end{align}
where $\hat{\mathbb{K}}(u)=\frac{1}{2\pi}\int e^{-\mathrm{i}tu}\frac{K^{\mathrm{ft}}(t)}{\hat{f}_{\epsilon}^{\mathrm{ft}}(t/b_{n})}dt$,
$\mathrm{i}=\sqrt{-1}$, $K^{\mathrm{ft}}$ is the Fourier transform
of a kernel function $K$, $b_{n}$ is a bandwidth parameter, and
$\hat{f}_{\epsilon}^{{\rm ft}}(t)=\frac{\hat{f}_{X}^{{\rm ft}}(t)}{\hat{f}^{{\rm ft}}(t)}$
with $\hat{f}_{X}^{{\rm ft}}(t)=\frac{1}{n}\sum_{j=1}^{n}e^{{\rm i}tX_{j}}$
and $\hat{f}^{{\rm ft}}(t)=\exp\left(\int_{0}^{t}\frac{{\rm i}\sum_{j=1}^{n}X_{j}e^{{\rm i}sW_{j}}}{\sum_{j=1}^{n}e^{{\rm i}sW_{j}}}ds\right)$.
In Section \ref{sec:Main-result}, we further show that $\sqrt{n}(\hat{\theta}_{1}-\theta_{1})$
converges to a normal distribution, which facilitates statistical
inference for $\theta_{1}$. Given the estimator $\hat{\theta}_{1}$
along with its estimated standard error, we may conduct a formal statistical
test of the implication (\ref{eq:hypotheses}) of the theory of Carroll
(2009). 

\subsection{Average derivative estimator\label{subsec:estimator}}

To understand $\hat{\theta}_{1}$ defined in (\ref{eq:thetahat})
in a general framework, we consider the estimation of $\theta_{c}=E[\{g^{\prime}(X^{*})-c\}f(X^{*})]$
for a constant $c$. We set $c=1$ for the test of (\ref{eq:hypotheses}),
whereas one can set $c=0$ if inference for the average derivative
is the objective \emph{per se}. Suppose that $f$ is continuously
differentiable and $\{g(x)-cx\}f^{2}(x)\to0$ as $|x|\to\infty$.
By integration by parts, $\theta_{c}$ can be expressed by
\[
\theta_{c}=-2E[(Y-cX^{*})f^{\prime}(X^{*})].
\]

If $X^{*}$ were directly observed, $\theta_{c}$ could be estimated
by $\tilde{\theta}_{c}=-\frac{2}{n}\sum_{j=1}^{n}(Y_{j}-cX_{j}^{*})\tilde{f}^{\prime}(X_{j}^{*})$
, where $\tilde{f}(x)$ denotes the kernel density estimator of $f(x)$,
and $\tilde{\theta}_{c}$ could be understood as the density-weighted
average derivative estimator by Powell, Stock and Stoker (1989) with
dependent variable $Y-cX^{*}$. In the case when $X^{*}$ is unobserved,
however, $\tilde{\theta}_{c}$ is infeasible. 

Motivated by \eqref{eq:income:X}--\eqref{eq:income:W}, suppose
that we can observe two noisy measurements of $X^{*}$, denoted by
$X$ and $W$, which are generated by
\begin{equation}
X=X^{*}+\epsilon,\quad W=X^{*}+\nu,\label{eq:model}
\end{equation}
where $\epsilon$ and $\nu$ are measurement errors associated with
$X$ and $W$, respectively. In particular, $\epsilon$ and $\nu$
have zero mean and are classical; that is, $\epsilon$ and $\nu$
are independent of $X^{*}$.\footnote{\protect\doublespacing We will invoke a weaker assumption than the full statistical independence
formally in Section \ref{sec:Main-result}.} To construct an estimator of $\theta_{c}$ in this case, note that
\begin{equation}
\theta_{c}=-2E[(Y-cW)f^{\prime}(X^{*})]=-2\int h_{c}(x)f^{\prime}(x)dx,\label{eq:theta}
\end{equation}
where $h_{c}(x)=\{g(x)-cx\}f(x)$. Let $f_{A}$ denote the density
of a random variable $A$, $a^{{\rm ft}}(t)=\int e^{{\rm i}tx}a(x)dx$
denote the Fourier transform of a function $a$, and $\{Y_{j},X_{j},W_{j}\}_{j=1}^{n}$
be an i.i.d. sample of $(Y,X,W)$. 

If $f_{\epsilon}$ were known, $h_{c}$ and $f$ could be estimated
using the deconvolution techniques by
\begin{align*}
\check{h}_{c}(x) & =\frac{1}{nb_{n}}\sum_{j=1}^{n}\mathbb{K}\left(\frac{x-X_{j}}{b_{n}}\right)(Y_{j}-cW_{j}),
\end{align*}
\[
\check{f}(x)=\frac{1}{nb_{n}}\sum_{j=1}^{n}\mathbb{K}\left(\frac{x-X_{j}}{b_{n}}\right),
\]
where $\mathbb{K}(u)=\frac{1}{2\pi}\int e^{-\mathrm{i}tu}\frac{K^{\mathrm{ft}}(t)}{f_{\epsilon}^{\mathrm{ft}}(t/b_{n})}dt$
is the deconvolution kernel function based on the characteristic function
$f_{\epsilon}^{{\rm ft}}$ of the true measurement error $\epsilon$.
Hence, it would be natural to estimate $\theta_{c}$ by the following
plug-in estimator
\begin{align*}
\check{\theta}_{c} & =-2\int\check{h}_{c}(x)\check{f}^{\prime}(x)dx\\
 & =-\frac{2}{n^{2}b_{n}^{3}}\sum_{j=1}^{n}\sum_{k=1}^{n}(Y_{j}-cW_{j})\int\mathbb{K}\left(\frac{x-X_{j}}{b_{n}}\right)\mathbb{K}^{\prime}\left(\frac{x-X_{k}}{b_{n}}\right)dx.
\end{align*}

Now, suppose that $f_{\epsilon}$ is unknown. Assume that $f^{{\rm ft}}$,
$f_{\epsilon}^{{\rm ft}}$ and $f_{\nu}^{{\rm ft}}$ do not vanish
anywhere, according to Kotlarski's (1967) identity, $f^{{\rm ft}}(t)=\exp\left(\int_{0}^{t}\frac{{\rm i}E[Xe^{{\rm i}sW}]}{E[e^{{\rm i}sW}]}ds\right)$,
which together with $f_{\epsilon}^{{\rm ft}}(t)=\frac{E[e^{{\rm i}tX}]}{f^{{\rm ft}}(t)}$
implies that $\hat{f}_{\epsilon}^{{\rm ft}}$ defined in Section \ref{subsec:motivation}
is a plug-in estimator of $f_{\epsilon}^{{\rm ft}}$ based on the
sample analogs of $E[e^{{\rm i}sX}]$ and $E[e^{{\rm i}tW}]$, and
$E[Xe^{{\rm i}sW}]$. Therefore, to estimate $\theta_{c}$ when $f_{\epsilon}$
is unknown, it is natural to replace $f_{\epsilon}^{{\rm ft}}$ in
$\check{\theta}_{c}$ by $\hat{f}_{\epsilon}^{{\rm ft}}$, which gives
\[
\hat{\theta}_{c}=-\frac{2}{n^{2}b_{n}^{3}}\sum_{j=1}^{n}\sum_{k=1}^{n}(Y_{j}-cW_{j})\int\mathbb{\hat{K}}\left(\frac{x-X_{j}}{b_{n}}\right)\hat{\mathbb{K}}^{\prime}\left(\frac{x-X_{k}}{b_{n}}\right)dx,
\]
where $\hat{\mathbb{K}}(u)=\frac{1}{2\pi}\int e^{-\mathrm{i}tu}\frac{K^{\mathrm{ft}}(t)}{\hat{f}_{\epsilon}^{\mathrm{ft}}(t/b_{n})}dt$,
defined in Section \ref{subsec:motivation}, is the deconvolution
kernel function based on the estimated measurement error characteristic
function $\hat{f}_{\epsilon}^{{\rm ft}}$. 

\section{Main result \label{sec:Main-result}}

In this section, we presents a formal theory that provides the asymptotic
validity of the test procedure proposed in Section \ref{sec:methodology}.
Specifically, we derive the asymptotic distribution for $\hat{\theta}_{c}$
and propose an estimator for its asymptotic variance. To this end,
we make the following assumptions.


\begin{asm}\quad 
\begin{description}
\item [{(1)}] $\{Y_{j},X_{j},W_{j}\}_{j=1}^{n}$ is an i.i.d. sample of
$(Y,X,W)$, where $(X,W)$ satisfies (\ref{eq:model}), $E[|X^{*}|^{2+\eta}]<\infty$
for some $\eta>0$, and $E[Y^{2}]<\infty$. Measurement errors $(\epsilon,\nu)$
are independent from $X^{*}$ and satisfy $E[Y|X^{*},\epsilon]=E[Y|X^{*}]$,
$E[\epsilon|\nu]=0$, $E[\nu|\epsilon]=0$, $E[|\epsilon|^{2+\eta}]<\infty$,
and $E[|\nu|^{2+\eta}]<\infty$. Characteristic functions $f^{{\rm ft}}$,
$f_{\epsilon}^{{\rm ft}}$ and $f_{\nu}^{{\rm ft}}$ do not vanish
anywhere.
\item [{(2)}] $h_{c}$ and $f$ have $\alpha$ continuous, bounded and
integrable derivatives, and satisfy
\begin{align*}
|f^{(\alpha)}(x+\Delta x)-f^{(\alpha)}(x)| & <m(x)|\Delta x|,\quad|h_{c}^{(\alpha)}(x+\Delta x)-h_{c}^{(\alpha)}(x)|<m(x)|\Delta x|
\end{align*}
for some bounded and integrable function $m(x)$ with $E[|m(X)|^{2}(1+|Y-cW|)^{2}]<\infty$.
\item [{(3)}] $K$ is symmetric, differentiable, and$\int K(u)du=1$, $\int u^{l}K(u)du=0$
for $1\le l<\alpha$, and $\int u^{\alpha}K(u)du\ne0$. Also $K^{{\rm ft}}$
is compactly supported on $[-1,1]$ and bounded.
\item [{(4)}] $\frac{n^{-1/2}b_{n}^{-2}\log(1/b_{n})^{2}\{\inf_{|t|\le b_{n}^{-1}}|f_{\epsilon}^{{\rm ft}}(t)|\}^{-2}\{\inf_{|t|\le b_{n}^{-1}}|f^{{\rm ft}}(t)|\}^{-2}}{\min\left\{ \{\inf_{|t|\le b_{n}^{-1}}|f_{\epsilon}^{{\rm ft}}(t)|\}^{2},\{\inf_{|t|\le b_{n}^{-1}}|f_{\nu}^{{\rm ft}}(t)|\}^{4}\{\inf_{|t|\le b_{n}^{-1}}|f^{{\rm ft}}(t)|\}^{2}b_{n}^{2}\right\} }\to0$
and $n^{1/2}b_{n}^{\alpha}\to0$ as $n\to\infty$. 
\item [{(5)}] $Var[\xi_{c}(Y,X,W)]<\infty$ where
\begin{gather*}
\xi_{c}(y,x,w)=\frac{1}{\pi}\int\left\{ \begin{array}{c}
\{\{h_{c}^{\prime}\}^{{\rm ft}}(-t)-(y-cw)\{f^{\prime}\}^{{\rm ft}}(-t)\}\frac{e^{{\rm i}tx}}{f_{\epsilon}^{{\rm ft}}(t)}\\
+\left\{ \begin{array}{c}
\{f^{{\rm ft}}(t)\{h_{c}^{\prime}\}^{{\rm ft}}(-t)-\{f^{\prime}\}^{{\rm ft}}(-t)h_{c}^{{\rm ft}}(t)\}\\
\times\left\{ -\frac{e^{{\rm i}tx}}{f^{{\rm ft}}(t)f_{\epsilon}^{{\rm ft}}(t)}+\int_{0}^{t}\left(-\frac{\{f^{{\rm ft}}\}^{\prime}(s)}{f^{{\rm ft}}(s)}+{\rm i}x\right)\frac{e^{{\rm i}sw}}{f^{{\rm ft}}(s)f_{\nu}^{{\rm ft}}(s)}ds\right\} 
\end{array}\right\} 
\end{array}\right\} dt.
\end{gather*}
\end{description}
\end{asm}

Assumption (1) requires random sampling and imposes conditions on
the distribution of $(Y,X^{*})$ and measurement errors $(\epsilon,\nu)$.
In particular, under the classical measurement error assumptions,
$E[Y|X^{*},\epsilon]=E[Y|X^{*}]$ and $E[\nu|\epsilon]=E[\nu]$ are
imposed for the identification of $E[Y-cW|X^{*}]$\footnote{\protect\doublespacing The full independence between the measurement error in regressor and
dependent variable has been commonly adopted in the literature of
nonparametric regression with errors-in-variables, (e.g., Fan and
Truong, 1993; Delaigle and Meister, 2007; Meister, 2009). In our setting,
however, it is sufficient to require the mean independence. In particular,
observe that the identification of $E[Y-cW|X^{*}]$ hinges on $$\{E[Y-cW|X]f_{X}\}(x)=\int\{E[Y-cW|X^{*}]f\}(x-e)f_{\epsilon}(e)de,$$ 
which holds if $E[Y-cW|X^{*},\epsilon]=E[Y-cW|X^{*}]$, or equivalently
$E[Y|X^{*},\epsilon]=E[Y|X^{*}]$ and $E[\nu|\epsilon]=E[\nu]$, under
the classical measurement error assumption. }, $E[\nu]=0$ is imposed for $\theta_{c}=-2E[(Y-cW)f^{\prime}(X^{*})]$,
and the non-vanishing characteristic functions and $E[\epsilon|\nu]=0$
is imposed for the Kotlarski's identity\footnote{\protect\doublespacing The full independence between $\epsilon$ and $\nu$ has been commonly
adopted in the literature on identification and estimation based on
Kotlarski's identity (e.g., Li and Vuong, 1998; Kato, Sasaki and Ura,
2021; Dong, Otsu and Taylor, 2022), but it
can be relaxed to the mean independence; See Schennach(2004).}. $E[|X^{*}|^{2+\eta}]<\infty$ and $E[|\epsilon|^{2+\eta}]<\infty$
are regularity conditions required by Lemma \ref{lem:delta}, which
is used to characterize the uniform convergence rate of the empirical
characteristic function of $(X,W)$ and their first-order derivatives
over an expanding region. We remark in the context of (\ref{eq:income:X})--(\ref{eq:income:W})
that our conditions, $E[\epsilon|\nu]=0$ and $E[\nu|\epsilon]=0$,
do not rule out the typical assumptions about the permanent-transitory
models of income and earnings dynamics in which the permanent shocks
$\left\{ \eta_{jt}\right\} _{t}$ and the transitory components $\left\{ \tau_{jt}\right\} _{t}$
are white noise processes (e.g., Bonhomme and Robin, 2010).

Assumption (2) constitutes mild assumptions on the smoothness of the
density function $f$ and the regression function $g$, which is equivalent
to Assumptions 3 and 5 in Powell, Stock and Stoker (1989). Assumption
(3) concerns the kernel function $K$. Specifically, we require $K$
to be a symmetric $\alpha$-th order kernel, which together with Assumption
(2) can be used to control the magnitude of the estimation bias. In
addition, we also require $K^{{\rm ft}}$ to be compactly supported,
which is to regularize the deconvolution problem that is well-known
to be ill-posed. 

Assumption (4) gives two conditions on the bandwidth $b_{n}$. In
particular, the first condition is needed to control the estimation
variance, and the second condition requires $g$ and $f$ to be smooth
enough so that the estimation bias is asymptotically negligible. Here,
we maintain a general expression without specifying the decay rates
of the tails of $f^{{\rm ft}}$, $f_{\epsilon}^{{\rm ft}}$ and $f_{\nu}^{{\rm ft}}$
as is typical in the deconvolution literature. By doing so, we can
apply our result to a larger set of measurement error distributions,
including both ordinary smooth distributions and supersmooth distributions.
Assumption (5) is a high-level assumption on the boundedness of the
asymptotic variance of $\hat{\theta}_{c}$; an analogous assumption
is made in Fan (1995) and Dong, Otsu and Taylor (2021).
\begin{thm*}
Under Assumptions (1) - (5), we have
\[
\sqrt{n}\{\hat{\theta}_{c}-\theta_{c}\}\overset{d}{\to}N(0,Var[\xi_{c}(Y,X,W)]).
\]
\end{thm*}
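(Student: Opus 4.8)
The plan is to establish the asymptotically linear representation $\sqrt n\{\hat\theta_c-\theta_c\}=n^{-1/2}\sum_{j=1}^n\{\xi_c(Y_j,X_j,W_j)-E[\xi_c(Y,X,W)]\}+o_p(1)$ and then invoke the Lindeberg--L\'evy CLT. I would argue in two steps through the infeasible estimator $\check\theta_c$ (with $f_\epsilon^{\rm ft}$ known), writing $\hat\theta_c-\theta_c=(\check\theta_c-\theta_c)+(\hat\theta_c-\check\theta_c)$ and decomposing $\xi_c=\xi_c^{(1)}+\xi_c^{(2)}$ into its first and second displayed lines. In both steps the starting point is the Fourier-domain form produced by Parseval's identity and the compact support of $K^{\rm ft}$: with $\hat m(t)=\tfrac1n\sum_j(Y_j-cW_j)e^{{\rm i}tX_j}$ and $\hat f_X^{\rm ft}(t)=\tfrac1n\sum_je^{{\rm i}tX_j}$,
\[
\check\theta_c=-\frac{1}{\pi}\int {\rm i}t\,\frac{K^{\rm ft}(b_nt)K^{\rm ft}(-b_nt)}{f_\epsilon^{\rm ft}(t)f_\epsilon^{\rm ft}(-t)}\,\hat m(t)\,\hat f_X^{\rm ft}(-t)\,dt,
\]
while $\hat\theta_c$ is the same expression with $f_\epsilon^{\rm ft}$ replaced by $\hat f_\epsilon^{\rm ft}=\hat f_X^{\rm ft}/\hat f^{\rm ft}$; both integrands are supported on $|t|\le b_n^{-1}$. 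Throughout I use the population identities, all consequences of Assumption~(1), $E[(Y-cW)e^{{\rm i}tX}]=\{h_c\}^{\rm ft}(t)f_\epsilon^{\rm ft}(t)$, $E[e^{{\rm i}tX}]=f^{\rm ft}(t)f_\epsilon^{\rm ft}(t)$, and $E[e^{{\rm i}sW}]=f^{\rm ft}(s)f_\nu^{\rm ft}(s)$.

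In the first step, $\check\theta_c$ is a second-order $U$-statistic in $(Y_j,X_j,W_j)$ with an $n$-dependent deconvolution-type kernel (its diagonal part vanishes). Its Hoeffding decomposition splits it into: a deterministic part equal to $\theta_c$ up to a bias of order $b_n^\alpha$, by the $\alpha$-th order kernel and the smoothness of $h_c$ and $f$ in Assumptions~(2)--(3), hence $o_p(n^{-1/2})$ after scaling since $n^{1/2}b_n^\alpha\to0$; a H\'ajek projection which, after letting $K^{\rm ft}(b_nt)\to1$ pointwise with domination furnished by Assumption~(5), gives $\sqrt n(\check\theta_c-\theta_c)=n^{-1/2}\sum_j\{\xi_c^{(1)}(Y_j,X_j,W_j)-E[\xi_c^{(1)}]\}+o_p(1)$; and a degenerate second-order part that is $o_p(n^{-1/2})$, which is implied by the first condition of Assumption~(4). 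This is the argument of Powell, Stock and Stoker (1989) adapted to deconvolution kernels along the lines of Fan (1995) and Dong, Otsu and Taylor (2021).

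The second step is the genuinely new ingredient. Linearizing the map $f_\epsilon^{\rm ft}\mapsto\hat\theta_c$ around $f_\epsilon^{\rm ft}$ produces a leading term linear in $\hat f_\epsilon^{\rm ft}-f_\epsilon^{\rm ft}$; expanding this via $\hat f_\epsilon^{\rm ft}-f_\epsilon^{\rm ft}\approx(\hat f_X^{\rm ft}-f_X^{\rm ft})/f^{\rm ft}-f_X^{\rm ft}(\hat f^{\rm ft}-f^{\rm ft})/\{f^{\rm ft}\}^2$ together with the Kotlarski expansion $\hat f^{\rm ft}(t)-f^{\rm ft}(t)\approx f^{\rm ft}(t)\int_0^t{\rm i}\{(\hat\psi_1-\psi_1)/\psi_0-\psi_1(\hat\psi_0-\psi_0)/\psi_0^2\}\,ds$, where $\hat\psi_1,\hat\psi_0$ are the empirical versions of $\psi_1(s)=E[Xe^{{\rm i}sW}]$ and $\psi_0(s)=E[e^{{\rm i}sW}]$, and invoking the identities $\psi_1/\psi_0=-{\rm i}\{f^{\rm ft}\}'/f^{\rm ft}$ and $\psi_0=f^{\rm ft}f_\nu^{\rm ft}$, one recovers precisely the inner integral $\int_0^t(-\{f^{\rm ft}\}'(s)/f^{\rm ft}(s)+{\rm i}x)e^{{\rm i}sw}/\{f^{\rm ft}(s)f_\nu^{\rm ft}(s)\}\,ds$ of $\xi_c$; collecting all linear contributions (again sending $K^{\rm ft}(b_nt)\to1$) yields $\sqrt n(\hat\theta_c-\check\theta_c)=n^{-1/2}\sum_j\{\xi_c^{(2)}(Y_j,X_j,W_j)-E[\xi_c^{(2)}]\}+o_p(1)$. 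The main obstacle is the induced second-order remainder: every such term is a product of two of the deviations $\hat m-m$, $\hat f_X^{\rm ft}-f_X^{\rm ft}$, $\hat\psi_1-\psi_1$, $\hat\psi_0-\psi_0$ (occasionally nested once more), weighted by the ill-posedness factors $1/f_\epsilon^{\rm ft}$, $1/f^{\rm ft}$, $1/f_\nu^{\rm ft}$ and by powers of $b_n^{-1}$ arising from the factor ${\rm i}t$, from differentiating $\hat{\mathbb{K}}$, and from the lengths of the $t$- and $s$-integrals over $[-b_n^{-1},b_n^{-1}]$. Bounding these deviations uniformly over $|t|\le b_n^{-1}$ is exactly where Lemma~\ref{lem:delta} enters (it requires the $2+\eta$ moments of $X^*$, $\epsilon$, $\nu$ from Assumption~(1)); adding the worst-case powers over the two regimes---deviations entering through $\hat f_X^{\rm ft}$ versus through the Kotlarski factor $\hat f^{\rm ft}$, the latter carrying an extra factor $1/(f^{\rm ft}f_\nu^{\rm ft})$ and an extra $b_n^{-1}$ from its $s$-integral---reproduces the quantity in the first condition of Assumption~(4), whose convergence to zero forces the remainder to be $o_p(n^{-1/2})$; the $\min\{\cdot,\cdot\}$ in that condition is the footprint of these two regimes.

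Adding the two steps, $\sqrt n\{\hat\theta_c-\theta_c\}=n^{-1/2}\sum_{j=1}^n\{\xi_c(Y_j,X_j,W_j)-E[\xi_c(Y,X,W)]\}+o_p(1)$ with $\xi_c=\xi_c^{(1)}+\xi_c^{(2)}$ as displayed; the two contributions are functions of the same observations and hence correlated, so the limiting variance is that of their sum. Since $\xi_c$ does not depend on $n$ and has finite variance by Assumption~(5), the Lindeberg--L\'evy CLT gives $n^{-1/2}\sum_j\{\xi_c(Y_j,X_j,W_j)-E[\xi_c]\}\overset{d}{\to}N(0,Var[\xi_c(Y,X,W)])$, whence $\sqrt n\{\hat\theta_c-\theta_c\}\overset{d}{\to}N(0,Var[\xi_c(Y,X,W)])$.
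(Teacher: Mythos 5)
Your proposal is correct in its essentials and uses the same core ingredients as the paper -- linearization of the estimated deconvolution kernel through the Kotlarski plug-in, a projection argument yielding the influence function, bias control via the $\alpha$-th order kernel with $n^{1/2}b_n^{\alpha}\to0$, remainder control via uniform rates for the empirical characteristic functions, and a CLT under Assumption (5) -- but the bookkeeping is genuinely different. You route the argument through the infeasible estimator $\check{\theta}_c$, treating it as a second-order U-statistic \`a la Powell--Stock--Stoker/Fan (1995) whose H\'ajek projection yields $\xi_{c,1}$, and then handle $\hat{\theta}_c-\check{\theta}_c$ by a separate Fourier-domain linearization in $\hat f_{\epsilon}^{\rm ft}-f_{\epsilon}^{\rm ft}$ whose linear part gives $\xi_{c,2}$, with all second-order remainders (products of deviations, including the implicit diagonal terms) killed by the uniform rates of Lemma \ref{lem:delta} and the first condition of Assumption (4). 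The paper instead expands $\hat{\mathbb{K}}=\mathbb{K}+\mathbb{A}+\mathbb{R}$ once, writes the leading part $S$ as a single third-order U-statistic with symmetric kernel $p_n$, and extracts $\xi_{c,1}+\xi_{c,2}$ from one H\'ajek projection (via Lemma A.3 of Ahn and Powell), with the higher-order terms $T$, the diagonal terms $R_n$, and the bias $B_n$ treated separately through Lemmas \ref{lem:pi1}--\ref{lem:bias2}; the projection's convergence to $\xi_c$ is then verified by showing $Var[v_{n,1}]+Var[v_{n,2}]=o(1)$, which uses Assumption (4) as well as (5). Your two-stage route buys interpretability (the split of the influence function into known-error randomness and Kotlarski plug-in randomness is immediate) and, by staying in the Fourier domain where $E[(Y-cW)e^{{\rm i}tX}]=h_c^{\rm ft}(t)f_{\epsilon}^{\rm ft}(t)$ holds exactly, it avoids the paper's $x$-space smoothing-bias bookkeeping (the $v_{n,1}$ term); the paper's single U-statistic treatment buys a systematic accounting of all cross and diagonal terms in one stroke. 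The one place where your sketch is lighter than the paper is the final variance step: showing that the truncated projection (with the $K^{\rm ft}(tb_n)$ factors) has variance converging to $Var[\xi_c]$ needs an $L^2$ dominated-convergence argument beyond the pointwise statement $K^{\rm ft}(tb_n)\to1$, but this matches the level of detail the paper itself supplies for $v_{n,2}$, so it is a shared looseness rather than a gap in your argument.
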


This is the main result of the paper. To understand it, we can decompose
$\xi_{c}$ into two parts as $\xi_{c}(Y,X,W)=\xi_{c,1}(Y,X,W)+\xi_{c,2}(Y,X,W)$,
where
\begin{gather*}
\xi_{c,1}(Y,X,W)=\frac{1}{\pi}\int\left\{ \{h_{c}^{\prime}\}^{{\rm ft}}(-t)-(Y-cW)\{f^{\prime}\}^{{\rm ft}}(-t)\right\} \frac{e^{{\rm i}tX}}{f_{\epsilon}^{{\rm ft}}(t)}dt,
\\
\xi_{c,2}(Y,X,W)=\frac{1}{\pi}\int\left\{ \begin{array}{c}
	\{\{h_{c}^{\prime}\}^{{\rm ft}}(-t)f^{{\rm ft}}(t)-h_{c}^{{\rm ft}}(t)\{f^{\prime}\}^{{\rm ft}}(-t)\}\\
	\times\left\{ -\frac{e^{{\rm i}tX}}{f^{{\rm ft}}(t)f_{\epsilon}^{{\rm ft}}(t)}+\int_{0}^{t}\left(-\frac{\{f^{{\rm ft}}\}^{\prime}(s)}{f^{{\rm ft}}(s)}+{\rm i}X\right)\frac{e^{{\rm i}sW}}{f^{{\rm ft}}(s)f_{\nu}^{{\rm ft}}(s)}ds\right\} 
\end{array}\right\} dt,
\end{gather*}
and consider, for ease of illustration, a special case in which $f_{\epsilon}^{{\rm ft}}$
is of the form
\[
f_{\epsilon}^{{\rm ft}}(t)=\frac{1}{c_{0}+c_{1}t+\cdots+c_{\beta}t^{\beta}},
\]
where $\beta$ is a positive integer, and $c_{0}=1$, $c_{1},\ldots,c_{\beta}$
are complex numbers, which includes the Laplace distribution as a
special case when $c_{1}=0$ and $\beta=2$. 

In such cases, using $\{a^{(k)}\}^{{\rm ft}}(t)=(-{\rm i}t)^{k}a^{{\rm ft}}(t)$
for a positive integer $k$, we obtain that
\begin{align*}
\xi_{c,1}(Y,X,W)= & \frac{1}{\pi}\int\left\{ \{h_{c}^{\prime}\}^{{\rm ft}}(-t)-(Y-cW)\{f^{\prime}\}^{{\rm ft}}(-t)\right\} \left\{ c_{0}+c_{1}t+\cdots+c_{\beta}t^{\beta}\right\} e^{{\rm i}tX}dt\\
= & \sum_{k=0}^{\beta}\frac{c_{k}}{\pi{\rm i}^{k}}\int\left\{ \underbrace{({\rm i}t)^{k+1}h_{c}^{{\rm ft}}(-t)}_{\{h_{c}^{(k+1)}\}^{{\rm ft}}(-t)}-(Y-cW)\underbrace{({\rm i}t)^{k}\{f^{\prime}\}^{{\rm ft}}(-t)}_{\{f^{(k+1)}\}^{{\rm ft}}(-t)}\right\} e^{{\rm i}tX}dt\\
= & \sum_{k=0}^{\beta}(-{\rm i})^{k}2c_{k}\left\{ (Y-cW)f^{(k+1)}(X)-h_{c}^{(k+1)}(X)\right\} ,
\end{align*}
which when $c=0$ coincides with $2r(X,Y)$ defined as in equation
(25) of Fan (1995). Thus, $\xi_{c,1}(Y,X,W)$ characterizes the randomness
of $\check{\theta}_{c}$, which is the estimator of $\theta_{c}$
when $f_{\epsilon}$ is known. Furthermore, compared to $2r(X,Y)$
in Fan (1995), $\xi_{c,1}(Y,X,W)$ allows non-zero value of $c$ and
can cover a larger set of measurement error distributions. 

Since $\xi_{c,1}(Y,X,W)$ characterizes the randomness in the estimation
of $\theta_{c}$ when $f_{\epsilon}$ is known, the additional randomness
introduced by using $\hat{f}_{\epsilon}^{{\rm ft}}$ in the place
of $f_{\epsilon}^{{\rm ft}}$ is reflected by $\xi_{c,2}(Y,X,W)$.
It is worthy to note that the structure of $\xi_{c,2}(Y,X,W)$ is
similar to that of $\xi_{c,1}(Y,X,W)$, but is more complicated. In
general, it is difficult to simplify $\xi_{c,2}(Y,X,W)$ as we did
for $\xi_{c,1}(Y,X,W)$ even when $f^{{\rm ft}}$, $f_{\epsilon}^{{\rm ft}}$
and $f_{\nu}^{{\rm ft}}$ are all specified. However, there are special
cases in which $\xi_{c,2}(Y,X,W)$ can be completely ignored. To see
a case in point, note that $\{h_{c}^{\prime}\}^{{\rm ft}}(-t)f^{{\rm ft}}(t)-h_{c}^{{\rm ft}}(t)\{f^{\prime}\}^{{\rm ft}}(-t)={\rm i}t\{h_{c}^{{\rm ft}}(-t)f^{{\rm ft}}(t)-h_{c}^{{\rm ft}}(t)f^{{\rm ft}}(-t)\}$,
which implies that if $h_{c}^{{\rm ft}}(-t)=h_{c}^{{\rm ft}}(t)$
and $f^{{\rm ft}}(-t)=f^{{\rm ft}}(t)$, for example when both $h_{c}$
and $f$ are symmetric around zero, $\xi_{c,2}(Y,X,W)=0$, i.e. the
estimation error brought by using $\hat{f}_{\epsilon}^{{\rm ft}}$
in the place of $f_{\epsilon}^{{\rm ft}}$ is exactly zero and $\hat{\theta}_{c}$
has exactly the same asymptotic distribution as that of $\check{\theta}_{c}$.

To test hypotheses on $\theta_{c}$ like $H_{0}$, besides the asymptotic
distribution of $\hat{\theta}_{c}$, we also need to estimate $s_{c}^{2}=Var[\xi_{c}(Y,X,W)]$.
Observe that we can rewrite
\begin{align*}
\xi_{c}(y,x,w) & =\frac{1}{\pi}\int{\rm i}t\left\{ \begin{array}{c}
\{h_{c}^{{\rm ft}}(-t)-(y-cw)f^{{\rm ft}}(-t)\}\frac{e^{{\rm i}tx}}{f_{\epsilon}^{{\rm ft}}(t)}\\
+\left\{ \begin{array}{c}
\{f^{{\rm ft}}(t)h_{c}^{{\rm ft}}(-t)-f^{{\rm ft}}(-t)h_{c}^{{\rm ft}}(t)\}\\
\times\left\{ -\frac{e^{{\rm i}tx}}{f^{{\rm ft}}(t)f_{\epsilon}^{{\rm ft}}(t)}+\int_{0}^{t}\left(-\frac{\{f^{{\rm ft}}\}^{\prime}(s)}{f^{{\rm ft}}(s)}+{\rm i}x\right)\frac{e^{{\rm i}sw}}{f^{{\rm ft}}(s)f_{\nu}^{{\rm ft}}(s)}ds\right\} 
\end{array}\right\} 
\end{array}\right\} dt\\
 & =\frac{1}{\pi}\int{\rm i}t\left\{ \begin{array}{c}
\{h_{c}^{{\rm ft}}(-t)-(y-cw)f^{{\rm ft}}(-t)\}\frac{e^{{\rm i}tx}}{f_{\epsilon}^{{\rm ft}}(t)}\\
+\left\{ \begin{array}{c}
\{f^{{\rm ft}}(t)h_{c}^{{\rm ft}}(-t)-f^{{\rm ft}}(-t)h_{c}^{{\rm ft}}(t)\}\\
\times\left\{ -\frac{e^{{\rm i}tx}}{f_{X}^{{\rm ft}}(t)}+\int_{0}^{t}\left(-\frac{{\rm i}E[Xe^{{\rm i}sW}]}{f_{W}^{{\rm ft}}(s)}+{\rm i}x\right)\frac{e^{{\rm i}sw}}{f_{W}^{{\rm ft}}(s)}ds\right\} 
\end{array}\right\} 
\end{array}\right\} dt,
\end{align*}
where the first step uses $\{a^{(k)}\}^{{\rm ft}}(t)=(-{\rm i}t)^{k}a^{{\rm ft}}(t)$
and the second step follows from $f_{X}^{{\rm ft}}=f^{{\rm ft}}f_{\epsilon}^{{\rm ft}}$,
$f_{W}^{{\rm ft}}=f^{{\rm ft}}f_{\nu}^{{\rm ft}}$ and $f^{{\rm ft}}(t)=\exp\left(\int_{0}^{t}\frac{{\rm i}E[Xe^{{\rm i}sW}]}{E[e^{{\rm i}sW}]}ds\right)$,
which implies $E[\xi_{c}(Y,X,W)]=0$ and $s_{c}^{2}=E[\xi_{c}^{2}(Y,X,W)]$.
In practice,$f^{{\rm ft}}$, $f_{\epsilon}^{{\rm ft}}$, $f_{X}^{{\rm ft}}$,
$f_{W}^{{\rm ft}}$, $E[Xe^{{\rm i}tW}]$ and $h_{c}^{{\rm ft}}$
are all unknown, and we have to estimate them. $f^{{\rm ft}}$, $f_{\epsilon}^{{\rm ft}}$
and $f_{X}^{{\rm ft}}$ can be estimated by $\hat{f}^{{\rm ft}}$,
$\hat{f}_{\epsilon}^{{\rm ft}}$, and $f_{X}^{{\rm ft}}$ defined
in Section \ref{sec:methodology}, $f_{W}^{{\rm ft}}$ can be estimated
by $\hat{f}_{W}^{{\rm ft}}(t)=\frac{1}{n}\sum_{j=1}^{n}e^{{\rm i}tW_{j}}$,
$E[Xe^{{\rm i}tW}]$ can be estimated by $\hat{E}[Xe^{{\rm i}tW}]=\frac{1}{n}\sum_{j=1}^{n}X_{j}e^{{\rm i}sW_{j}}$,
and $h_{c}^{{\rm ft}}$ can be estimated by $\hat{h}_{c}^{{\rm ft}}(t)=\frac{\frac{1}{n}\sum_{j=1}^{n}(Y_{j}-cW_{j})e^{{\rm i}tX_{j}}}{\hat{f}_{\epsilon}^{\mathrm{ft}}(t)}$.
Therefore, we can estimate $s_{c}^{2}$ by $\hat{s}_{c}^{2}=\frac{1}{n}\sum_{j=1}^{n}\hat{\xi}_{c}^{2}(Y_{j},X_{j},W_{j})$
with
\[
\hat{\xi}_{c}(y,x,w)=\frac{{\rm 1}}{\pi}\int{\rm i}t\left\{ \begin{array}{c}
\{\hat{h}_{c}^{{\rm ft}}(-t)-(y-cw)\hat{f}^{{\rm ft}}(-t)\}\frac{e^{{\rm i}tx}}{\hat{f}_{\epsilon}^{{\rm ft}}(t)}\\
+\left\{ \begin{array}{c}
\{\hat{f}^{{\rm ft}}(t)\hat{h}_{c}^{{\rm ft}}(-t)-\hat{f}^{{\rm ft}}(-t)\hat{h}_{c}^{{\rm ft}}(t)\}\\
\times\left\{ -\frac{e^{{\rm i}tx}}{\hat{f}_{X}^{{\rm ft}}(t)}+\int_{0}^{t}\left(-\frac{{\rm i}\hat{E}[Xe^{{\rm i}sW}]}{\hat{f}_{W}^{{\rm ft}}(s)}+{\rm i}x\right)\frac{e^{{\rm i}sw}}{\hat{f}_{W}^{{\rm ft}}(s)}ds\right\} 
\end{array}\right\} 
\end{array}\right\} K^{{\rm ft}}(tb_{n})dt,
\]
where $K^{{\rm ft}}(tb_{n})$ is introduced to regularize the integration. 

\section{Simulation\label{sec:Simulation}}

This section presents simulation studies to analyze the finite sample
performance of the proposed method of inference about $\theta_{c}$.
We generate $N$ independent copies of the observed variables $(Y,X,U)$
via the structural equations
\begin{align*}
Y & =f(X^{\ast})+U=X^{\ast}-\delta X^{\ast}+U,\\
X & =X^{\ast}+\epsilon,\qquad\textnormal{and}\\
W & =X^{\ast}+\nu,
\end{align*}
where the latent variables $(X^{\ast},U,\epsilon,\nu)$ are in turn
generated independently from the standard normal distribution. Note
in this setting that the null hypothesis $H_{0}:\theta_{1}\ge0$ is
true if and only if $\delta\le0$. Furthermore, positive values of
the design parameter $\delta$ in this data generating process measures
deviations from the null hypothesis. We run sets of simulations across
combinations of the values of $\delta\in[0.0,0.5]$ and $N\in\left\{ 250,500\right\} $,
where each set of simulations consists of 2,500 Monte Carlo iterations. 

\begin{center}
	[FIGURE 1 HERE]
\end{center}

Figure \ref{fig:simulation} plots the Monte Carlo frequencies of
rejecting the null hypothesis $H_{0}:\theta_{1}\ge0$ against the
alternative $H_{1}:\theta_{1}<0$ based on the one-sided test with
our estimator $\hat{\theta}_{1}$ and its standard error estimator
$\hat{s}_{1}$. The nominal size of the test is set to 0.05. The horizontal
axis of the figure measures the deviation $\delta\in[0.0,0.5]$ away
from the null hypothesis $H_{0}$. The dashed (respectively, solid)
curve indicates the results with $N=250$ (respectively, $500$).
Observe that the rejection frequency at $\delta=0$ is close to the
nominal size, 0.05. As $\delta$ becomes larger, on the other hand,
the rejection frequencies increase. For a given value of $\delta>0$,
the rejection frequency is larger for the larger sample size, demonstrating
the power of the test as well as the size control. 

We ran additional simulations with alternative data generating designs,
only to find very similar simulation results to the baseline results
presented above. Overall, the simulation outcomes demonstrate excellent
small-sample performance of the estimation and inference methods.
Our observations that the asymptotic approximations are already accurate
even at such small sample sizes $N$ as $250$ demonstrate the practical
merit of our root-n consistent test under the highly sophisticated
problem of errors-in-variables nonparametric regressions.

\section{Application \label{sec:Application}}

This section revisits the analysis of the MPCP introduced in Sections
\ref{sec:introduction}--\ref{sec:methodology}. Recall from (\ref{eq:income:X})
and (\ref{eq:income:W}) that we require four time periods of panel
data $\left\{ \left\{ \iota_{j\tau}\right\} _{\tau=t-2}^{t+1}\right\} _{j=1}^{n}$
on observed income to construct the variables $\left\{ (X_{j},W_{j})\right\} _{j=1}^{n}$
that we use as inputs for our proposed method of inference. Using
the U.S. Panel Study of Income Dynamics (PSID) for the four most recent
survey years 2013, 2015, 2017, and 2019, we aim to test the null hypothesis
$H_{0}:\theta_{1}\ge0$ of (super-) unit MPCP against the alternative
hypothesis $H_{1}:\theta_{1}<0$ of sub-unit MPCP. Rejecting the null
hypothesis $H_{0}$ supports the buffer-stock model that arises with
inpatient consumers having a standard precautionary saving motive. 

The income variable $\iota_{jt}$ is defined by the log total family
income of unit $j$ reported in year $t$. The consumption variable
$C_{jt}$ is similarly defined by the log family expenditures of unit
$j$ reported in year $t$, where the categories of consumption consist
of food, housing, telephone/internet, transportation, vehicle, education,
child care, health care, household repairs, household furnishing,
clothing, and recreation. The first three columns of Table \ref{tab:summary_statistics}
display summary statistics of this data set. Displayed values are
the sample means. Parentheses enclose sample standard deviations.

\begin{center}
	[TABLE 1 HERE]
\end{center}

Adapting (\ref{eq:income:X}) and (\ref{eq:income:W}) to this panel
data set, we construct $X_{j}=\iota_{j2017}-\iota_{j2013}$ and $W_{j}=\iota_{j2019}-\iota_{j2015}$.
Likewise, we construct the outcome variable by $Y_{j}=C_{j2017}-C_{j2015}$.
We drop units that experience a missing value or an infinite value
for $X_{j}$, $W_{j}$ or $Y_{j}$. Consequently, we obtain a balanced
panel of 5976 household units. The last three columns of Table \ref{tab:summary_statistics}
display summary statistics of the constructed variables $\left\{ (X_{j},W_{j},Y_{j})\right\} _{j=1}^{n}$.
Again, displayed values are the sample means, and parentheses enclose
sample standard deviations.

Applying our proposed method of estimation and inference, we obtain
the point estimate of $\hat{\theta}_{1}=-0.0607$ with the estimated
standard error of $\hat{s}_{1}=0.0052$. Using our asymptotic normality
results along with these estimates, we formally reject the one-sided
test of the null hypothesis $H_{0}:\theta_{1}\ge0$ of (super-) unit
MPCP in favor of the alternative hypothesis $H_{1}:\theta_{1}<0$
of sub-unit MPCP. Our test result supports the buffer-stock model
that arises with inpatient consumers having a standard precautionary
saving motive. Even though a number of prior studies have calibrated
or estimated the MPCP under various models of income dynamics, to
our best knowledge, our result is the first formal statistical inference
result about the MPCP using flexible nonparametric distribution in
the permanent-transitory model of income processes.

\section{Conclusion\label{sec:Conclusion}}

In this paper, we propose a density-weighted average derivative estimator
when two noisy measures of a latent regressor is available. Both measures
have classical errors, and the error distributions are possibly asymmetric.
We show that this estimator achieves the root-$n$ rate of convergence
and is asymptotically normal. Simulation studies demonstrate excellent
small-sample performance, and support the merit of the root-$n$ asymptotic
normality. Based on the proposed estimator, we construct a test on
the sub-unity of MPCP under a nonparametric consumption model. In
particular, under a permanent-transitory model of income dynamics,
we construct two noisy measures of a permanent income shock using
four periods data. With an application using recent waves of the U.S.
PSID, we reject the null hypothesis of unit MPCP in favor or a sub-unit
MPCP, supporting the buffer-stock model of saving.

\newpage{}

\appendix

\section{Proof of the theorem}

Let $\hat{\mu}_{\iota}(t)=\frac{1}{n}\sum_{l=1}^{n}\mu_{\iota,l}(t)$
and $\mu_{\iota}(t)=E[\mu_{\iota,1}(t)]$ for $\iota=1,2,3$ with
$\mu_{1,l}(t)=e^{{\rm i}tX_{l}}$, $\mu_{2,l}(t)=e^{{\rm i}tW_{l}}$,
and $\mu_{3,l}(t)=X_{l}e^{{\rm i}tW_{l}}$. Then, $\hat{f}_{\epsilon}^{{\rm ft}}(t)=\hat{\mu}_{1}(t)\exp\left(-\int_{0}^{t}\frac{{\rm i}\hat{\mu}_{3}(s)}{\hat{\mu}_{2}(s)}ds\right)$
and $f_{\epsilon}(t)=\mu_{1}(t)\exp\left(-\int_{0}^{t}\frac{{\rm i}\mu_{3}(s)}{\mu_{2}(s)}ds\right)$.
By expanding $(\hat{\mu}_{1},\hat{\mu}_{2},\hat{\mu}_{3})$ around
$(\mu_{1},\mu_{2},\mu_{3})$, we obtain
\begin{equation}
\mathbb{\hat{K}}(u)=\mathbb{K}(u)+\mathbb{A}(u)+\mathbb{R}(u),\label{pf:kap}
\end{equation}
where $\mathbb{A}(u)=\frac{1}{2\pi}\int e^{-{\rm i}tu}\frac{K^{{\rm ft}}(t)}{f_{\epsilon}^{{\rm ft}}(t/b_{n})}\hat{\Pi}(t/b_{n})dt$
and $\mathbb{R}(u)=\frac{1}{2\pi}\int e^{-{\rm i}tu}\frac{K^{{\rm ft}}(t)}{f_{\epsilon}^{{\rm ft}}(t/b_{n})}\hat{\Pi}^{{\rm res}}(t/b_{n})dt$
with
\begin{gather*}
\hat{\Pi}(t)=\frac{1}{n}\sum_{l=1}^{n}\Pi_{l}(t),\quad\Pi_{l}(t)=-\frac{\delta_{1,l}(t)}{\mu_{1}(t)}+{\rm i}\int_{0}^{t}\left\{ -\frac{\mu_{3}(s)\delta_{2,l}(s)}{\mu_{2}^{2}(s)}+\frac{\delta_{3,l}(s)}{\mu_{2}(s)}\right\} ds,\\
\hat{\Pi}^{{\rm res}}(t)=\frac{\hat{\delta}_{1}^{2}(t)}{\mu_{1}(t)+\hat{\delta}_{1}(t)}-\int_{0}^{t}{\rm i}\left\{ -\frac{\mu_{3}(s)\hat{\delta}_{2}(s)}{\mu_{2}^{2}(s)}+\frac{\hat{\delta}_{3}(s)}{\mu_{2}(s)}\right\} \frac{\hat{\delta}_{2}(s)}{\mu_{2}(s)+\hat{\delta}_{2}(s)}ds\\
+\int_{0}^{t}{\rm i}\left\{ -\frac{\mu_{3}(s)\hat{\delta}_{2}(s)}{\mu_{2}^{2}(s)}+\frac{\hat{\delta}_{3}(s)}{\mu_{2}(s)}\right\} \left\{ 1-\frac{\hat{\delta}_{2}(s)}{\mu_{2}(s)+\hat{\delta}_{2}(s)}\right\} ds\left\{ -\frac{\hat{\delta}_{1}(t)}{\mu_{1}(t)}+\frac{\hat{\delta}_{1}^{2}(t)}{\mu_{1}(t)+\hat{\delta}_{1}(t)}\right\} \\
-\frac{1}{2}e^{\bar{\phi}(t)}\left(\int_{0}^{t}\left\{ -\frac{\mu_{3}(s)\hat{\delta}_{2}(s)}{\mu_{2}^{2}(s)}+\frac{\hat{\delta}_{3}(s)}{\mu_{2}(s)}\right\} \left\{ 1-\frac{\hat{\delta}_{2}(s)}{\mu_{2}(s)+\hat{\delta}_{2}(s)}\right\} ds\right)^{2}\left\{ 1-\frac{\hat{\delta}_{1}(t)}{\mu_{1}(t)}+\frac{\hat{\delta}_{1}^{2}(t)}{\mu_{1}(t)+\hat{\delta}_{1}(t)}\right\} ,
\end{gather*}
for some $|\bar{\phi}(t)|\le\left|\int_{0}^{t}\left\{ -\frac{\mu_{3}(s)\hat{\delta}_{2}(s)}{\mu_{2}^{2}(s)}+\frac{\hat{\delta}_{3}(s)}{\mu_{2}(s)}\right\} \left\{ 1-\frac{\hat{\delta}_{2}(s)}{\mu_{2}(s)+\hat{\delta}_{2}(s)}\right\} ds\right|$,
where $\hat{\delta}_{\iota}(t)=\frac{1}{n}\sum_{l=1}^{n}\delta_{\iota,l}(t)$
with $\delta_{\iota,l}(t)=\mu_{\iota,l}(t)-\mu_{\iota}(t)$ for $\iota=1,2,3$.
Here, $\mathbb{A}$ denotes the Fréchet derivative of $\hat{\mathbb{K}}$
as a functional of $(\hat{\mu}_{1},\hat{\mu}_{2},\hat{\mu}_{3})$
evaluated at $(\mu_{1},\mu_{2},\mu_{3})$ in the direction of $(\hat{\delta}_{1},\hat{\delta}_{2},\hat{\delta}_{3})$,
and $\mathbb{R}$ contains the remainders. Observe that (\ref{pf:kap}) implies
\begin{align*}
 & \hat{\theta}_{c}=\frac{2}{n^{2}}\sum_{j=1}^{n}\sum_{k=1}^{n}(-1)b_{n}^{-3}(Y_{j}-cW_{j})\int\hat{\mathbb{K}}\left(\frac{x-X_{j}}{b_{n}}\right)\mathbb{\hat{K}}^{\prime}\left(\frac{x-X_{k}}{b_{n}}\right)dx\\
= & \underbrace{\frac{2}{n^{2}}\sum_{j=1}^{n}\sum_{k=1}^{n}(-1)b_{n}^{-3}(Y_{j}-cW_{j})\int\left\{ \begin{array}{c}
\mathbb{K}\left(\frac{x-X_{j}}{b_{n}}\right)\mathbb{K}^{\prime}\left(\frac{x-X_{k}}{b_{n}}\right)\\
+\mathbb{A}\left(\frac{x-X_{j}}{b_{n}}\right)\mathbb{K}^{\prime}\left(\frac{x-X_{k}}{b_{n}}\right)+\mathbb{K}\left(\frac{x-X_{j}}{b_{n}}\right)\mathbb{A}^{\prime}\left(\frac{x-X_{k}}{b_{n}}\right)
\end{array}\right\} dx}_{\eqqcolon S}\\
 & +\underbrace{\frac{2}{n^{2}}\sum_{j=1}^{n}\sum_{k=1}^{n}(-1)b_{n}^{-3}(Y_{j}-cW_{j})\int\left\{ \begin{array}{c}
\mathbb{A}\left(\frac{x-X_{j}}{b_{n}}\right)\mathbb{A}^{\prime}\left(\frac{x-X_{k}}{b_{n}}\right)\\
+\mathbb{R}\left(\frac{x-X_{j}}{b_{n}}\right)\mathbb{K}^{\prime}\left(\frac{x-X_{k}}{b_{n}}\right)+\mathbb{K}\left(\frac{x-X_{j}}{b_{n}}\right)\mathbb{R}^{\prime}\left(\frac{x-X_{k}}{b_{n}}\right)\\
+\mathbb{R}\left(\frac{x-X_{j}}{b_{n}}\right)\mathbb{A}^{\prime}\left(\frac{x-X_{k}}{b_{n}}\right)+\mathbb{A}\left(\frac{x-X_{j}}{b_{n}}\right)\mathbb{R}^{\prime}\left(\frac{x-X_{k}}{b_{n}}\right)\\
+\mathbb{R}\left(\frac{x-X_{j}}{b_{n}}\right)\mathbb{R}^{\prime}\left(\frac{x-X_{k}}{b_{n}}\right)
\end{array}\right\} dx}_{\eqqcolon T}
\end{align*}

First, we are going to show
\begin{equation}
T=o_{p}(n^{-1/2}).\label{pf:t}
\end{equation}
To show (\ref{pf:t}), decompose $T=T_{1}+T_{2}+T_{3}+T_{4}$, where
\begin{align*}
T_{1} & =\frac{-2}{n^{2}b_{n}^{3}}\sum_{j=1}^{n}\sum_{k=1}^{n}(Y_{j}-cW_{j})\int\mathbb{A}\left(\frac{x-X_{j}}{b_{n}}\right)\mathbb{A}^{\prime}\left(\frac{x-X_{k}}{b_{n}}\right)dx,\\
T_{2} & =\frac{-2}{n^{2}b_{n}^{3}}\sum_{j=1}^{n}\sum_{k=1}^{n}(Y_{j}-cW_{j})\int\left\{ \mathbb{R}\left(\frac{x-X_{j}}{b_{n}}\right)\mathbb{K}^{\prime}\left(\frac{x-X_{k}}{b_{n}}\right)+\mathbb{K}\left(\frac{x-X_{j}}{b_{n}}\right)\mathbb{R}^{\prime}\left(\frac{x-X_{k}}{b_{n}}\right)\right\} dx,\\
T_{3} & =\frac{-2}{n^{2}b_{n}^{3}}\sum_{j=1}^{n}\sum_{k=1}^{n}(Y_{j}-cW_{j})\int\left\{ \mathbb{R}\left(\frac{x-X_{j}}{b_{n}}\right)\mathbb{A}^{\prime}\left(\frac{x-X_{k}}{b_{n}}\right)+\mathbb{A}\left(\frac{x-X_{j}}{b_{n}}\right)\mathbb{R}^{\prime}\left(\frac{x-X_{k}}{b_{n}}\right)\right\} dx,\\
T_{4} & =\frac{-2}{n^{2}b_{n}^{3}}\sum_{j=1}^{n}\sum_{k=1}^{n}(Y_{j}-cW_{j})\int\mathbb{R}\left(\frac{x-X_{j}}{b_{n}}\right)\mathbb{R}^{\prime}\left(\frac{x-X_{k}}{b_{n}}\right)dx.
\end{align*}
For $T_{1}$, we have
\begin{eqnarray*}
|T_{1}| & = & \frac{1}{\pi n^{2}b_{n}^{3}}\left|\sum_{j=1}^{n}\sum_{k=1}^{n}(Y_{k}-cW_{k})\iint\left\{ \begin{array}{c}
\frac{1}{2\pi}\int e^{-{\rm i}(t_{1}+t_{2})x/b_{n}}dx\,t_{1}e^{{\rm i}\left(\frac{t_{1}X_{j}+t_{2}X_{k}}{b_{n}}\right)}\\
\times\frac{K^{{\rm ft}}(t_{1})K^{{\rm ft}}(t_{2})}{f_{\epsilon}^{{\rm ft}}(t_{1}/b_{n})f_{\epsilon}^{{\rm ft}}(t_{2}/b_{n})}\hat{\Pi}(t_{1}/b_{n})\hat{\Pi}(t_{2}/b_{n})
\end{array}\right\} dt_{1}dt_{2}\right|\\
 & = & \frac{1}{\pi n^{2}b_{n}^{2}}\left|\sum_{j=1}^{n}\sum_{k=1}^{n}(Y_{k}-cW_{k})\iint\left\{ \begin{array}{c}
\frac{1}{2\pi}\int e^{-{\rm i}(t_{1}+t_{2})\tilde{x}}d\tilde{x}\,t_{1}e^{{\rm i}\left(\frac{t_{1}X_{j}+t_{2}X_{k}}{b_{n}}\right)}\\
\times\frac{K^{{\rm ft}}(t_{1})K^{{\rm ft}}(t_{2})}{f_{\epsilon}^{{\rm ft}}(t_{1}/b_{n})f_{\epsilon}^{{\rm ft}}(t_{2}/b_{n})}\hat{\Pi}(t_{1}/b_{n})\hat{\Pi}(t_{2}/b_{n})
\end{array}\right\} dt_{1}dt_{2}\right|\\
 & = & \frac{{\rm 1}}{\pi n^{2}b_{n}^{2}}\left|\sum_{j=1}^{n}\sum_{k=1}^{n}(Y_{k}-cW_{k})\int te^{{\rm i}t\left(\frac{X_{j}-X_{k}}{b_{n}}\right)}\frac{|K^{{\rm ft}}(t)|^{2}}{|f_{\epsilon}^{{\rm ft}}(t/b_{n})|^{2}}|\hat{\Pi}(t/b_{n})|^{2}dt\right|\\
 & = & O_{p}\left(\frac{\{\sup_{|t|\le b_{n}^{-1}}|\hat{\Pi}(t)|\}^{2}}{b_{n}^{2}\{\inf_{|t|\le b_{n}^{-1}}|f_{\epsilon}^{{\rm ft}}(t)|\}^{2}}\right),
\end{eqnarray*}
where the second step follows from the change of variables $\tilde{x}=x/b_{n}$,
the third step follows by $\int\delta(t-s)f(t)dt=f(s)$ with Dirac
delta function $\delta(t)=\frac{1}{2\pi}\int e^{-{\rm i}tx}dx$, and
the last step uses the implication that $K^{{\rm ft}}$ is supported
on $[-1,1]$ under Assumption (3). Similar arguments show 
\begin{eqnarray*}
|T_{2}| & = & O_{p}\left(\frac{\sup_{|t|\le b_{n}^{-1}}|\hat{\Pi}^{{\rm res}}(t)|}{b_{n}^{2}\{\inf_{|t|\le b_{n}^{-1}}|f_{\epsilon}^{{\rm ft}}(t)|\}^{2}}\right),\\
|T_{3}| & = & O_{p}\left(\frac{\sup_{|t|\le b_{n}^{-1}}|\hat{\Pi}(t)|\sup_{|t|\le b_{n}^{-1}}|\hat{\Pi}^{{\rm res}}(t)|}{b_{n}^{2}\{\inf_{|t|\le b_{n}^{-1}}|f_{\epsilon}^{{\rm ft}}(t)|\}^{2}}\right),\\
|T_{4}| & = & O_{p}\left(\frac{\{\sup_{|t|\le b_{n}^{-1}}|\hat{\Pi}^{{\rm res}}(t)|\}^{2}}{b_{n}^{2}\{\inf_{|t|\le b_{n}^{-1}}|f_{\epsilon}^{{\rm ft}}(t)|\}^{2}}\right),
\end{eqnarray*}
and thus (\ref{pf:t}) follows by Lemma \ref{lem:pi1} and Assumption
(4). 

Hence, for the asymptotic distribution of $\hat{\theta}_{c}$, it is sufficient to focus on $S$. Let $a_{j}=(Y_{j},X_{j},W_{j})$ and let $\text{Sym}(\mathcal{I})$
denote the collection of all permutations of an ordered set $\mathcal{I}$.
Observe that
\begin{align*}
 & S=\frac{2}{n^{2}}\sum_{j=1}^{n}\sum_{k=1}^{n}(-1)b_{n}^{-3}(Y_{j}-W_{j})\int\left\{ \begin{array}{c}
\mathbb{K}\left(\frac{x-X_{j}}{b_{n}}\right)\mathbb{K}^{\prime}\left(\frac{x-X_{k}}{b_{n}}\right)\\
+\mathbb{A}\left(\frac{x-X_{j}}{b_{n}}\right)\mathbb{K}^{\prime}\left(\frac{x-X_{k}}{b_{n}}\right)+\mathbb{K}\left(\frac{x-X_{j}}{b_{n}}\right)\mathbb{A}^{\prime}\left(\frac{x-X_{k}}{b_{n}}\right)
\end{array}\right\} dx\\
= & \frac{2}{n^{3}}\sum_{j=1}^{n}\sum_{k=1}^{n}\sum_{l=1}^{n}\underbrace{(-1)b_{n}^{-3}(Y_{j}-W_{j})\int\left\{ \begin{array}{c}
\mathbb{K}\left(\frac{x-X_{j}}{b_{n}}\right)\mathbb{K}^{\prime}\left(\frac{x-X_{k}}{b_{n}}\right)\\
+\left\{ \frac{1}{2\pi}\int e^{-{\rm i}t\left(\frac{x-X_{j}}{b_{n}}\right)}\frac{K^{{\rm ft}}(t)}{f_{\epsilon}^{{\rm ft}}(t/b_{n})}\Pi_{l}(t/b_{n})dt\right\} \mathbb{K^{\prime}}\left(\frac{x-X_{k}}{b_{n}}\right)\\
+\mathbb{K}\left(\frac{x-X_{j}}{b_{n}}\right)\left\{ \frac{-{\rm i}}{2\pi}\int te^{-{\rm i}t\left(\frac{x-X_{k}}{b_{n}}\right)}\frac{K^{{\rm ft}}(t)}{f_{\epsilon}^{{\rm ft}}(t/b_{n})}\Pi_{l}(t/b_{n})dt\right\} 
\end{array}\right\} dx}_{\eqqcolon q_{n}(a_{j},a_{k},a_{l})}\\
= & \frac{(n-1)(n-2)}{n^{2}}\underbrace{\binom{n}{3}^{-1}\sum_{j=1}^{n-2}\sum_{k=j+1}^{n-1}\sum_{l=k+1}^{n}\overbrace{\sum_{(j',k',l')\in\text{Sym}((j,k,l))}q_{n}(a_{j},a_{k},a_{l})/3}^{\eqqcolon p_{n}(a_{j},a_{k},a_{l})}-E[p_{n}(a_{j},a_{k},a_{l})]}_{\eqqcolon U_{n}}\\
 & +\underbrace{\frac{(n-1)(n-2)}{n^{2}}E[p_{n}(a_{1},a_{2},a_{3})]}_{\eqqcolon B_{n}}\\
 & +\underbrace{\frac{2}{n^{3}}\left\{ \sum_{j=1}^{n-1}\sum_{k=j+1}^{n}\left\{ \begin{array}{c}
q_{n}(a_{j},a_{j},a_{k})+q_{n}(a_{k},a_{k},a_{j})\\
+q_{n}(a_{j},a_{k},a_{j})+q_{n}(a_{k},a_{j},a_{k})\\
+q_{n}(a_{j},a_{k},a_{k})+q_{n}(a_{k},a_{j},a_{j})
\end{array}\right\} +\sum_{j=1}^{n}q_{n}(a_{j},a_{j},a_{j})\right\} }_{\eqqcolon R_{n}},
\end{align*}
where $U_{n}$ is a 3rd order U-statistic with symmetric kernel $p_{n}$, 
and its Hájek projection is given by
\[
U_{n}^{*}=\frac{3}{n}\sum_{j=1}^{n}\{r_{n}(a_{j})-E[r_{n}(a_{j})]\},
\]
where $r_{n}(a_{j})=E_{j}[p_{n}(a_{j},a_{k},a_{l})]$ and $E_{j}[\cdot]=E[\cdot|a_{j}]$.
Then, we can write
\begin{equation}
S-\theta_{c}=U_{n}^{*}(1-1/n)(1-2/n)+\{U_{n}-U_{n}^{*}\}(1-1/n)(1-2/n)+\{B_{n}-\theta_{c}\}+R_{n}.\label{pf:dec2}
\end{equation}
First, we are going to show
\begin{equation}
R_{n}=o_{p}(n^{-1/2}).\label{pf:r}
\end{equation}
To show (\ref{pf:r}), decompose $R_{n}=R_{n,1}+R_{n,2}+R_{n,3}$,
where
\begin{align*}
	R_{n,1} &=\frac{2}{n^{3}}\sum_{j=1}^{n-1}\sum_{k=j+1}^{n}\{q_{n}(a_{j},a_{j},a_{k})+q_{n}(a_{k},a_{k},a_{j})+q_{n}(a_{j},a_{k},a_{k})+q_{n}(a_{k},a_{j},a_{j})\}
	\\
	R_{n,2}  &=\frac{2}{n^{3}}\sum_{j=1}^{n-1}\sum_{k=j+1}^{n}\{q_{n}(a_{j},a_{k},a_{j})+q_{n}(a_{k},a_{j},a_{k})\},
	\\
	R_{n,3}
	&=\frac{2}{n^{3}}\sum_{j=1}^{n}q_{n}(a_{j},a_{j},a_{j}).
\end{align*}

Observe that
\begin{align*}
q_{n}(d_{j},d_{k},d_{l}) & =\frac{{\rm i}(Y_{j}-cW_{j})}{2\pi b_{n}^{3}}\iint\left\{ \begin{array}{c}
\left\{ \frac{1}{2\pi}\int e^{-{\rm i}(t_{1}+t_{2})x/b_{n}}dx\right\} t_{2}e^{{\rm i}\left(\frac{t_{1}X_{j}+t_{2}X_{k}}{b_{n}}\right)}\\
\times\frac{K^{\mathrm{ft}}(t_{1})K^{\mathrm{ft}}(t_{2})}{f_{\epsilon}^{\mathrm{ft}}(t_{1}/b_{n})f_{\epsilon}^{\mathrm{ft}}(t_{2}/b_{n})}\{1+\Pi_{l}(t_{1}/b_{n})+\Pi_{l}(t_{2}/b_{n})\}
\end{array}\right\} dt_{1}dt_{2}\\
 & =\frac{{\rm i}(Y_{j}-cW_{j})}{2\pi b_{n}^{2}}\iint\left\{ \begin{array}{c}
\left\{ \frac{1}{2\pi}\int e^{-{\rm i}(t_{1}+t_{2})\tilde{x}}d\tilde{x}\right\} t_{2}e^{{\rm i}\left(\frac{t_{1}X_{j}+t_{2}X_{k}}{b_{n}}\right)}\\
\times\frac{K^{\mathrm{ft}}(t_{1})K^{\mathrm{ft}}(t_{2})}{f_{\epsilon}^{\mathrm{ft}}(t_{1}/b_{n})f_{\epsilon}^{\mathrm{ft}}(t_{2}/b_{n})}\{1+\Pi_{l}(t_{1}/b_{n})+\Pi_{l}(t_{2}/b_{n})\}
\end{array}\right\} dt_{1}dt_{2}\\
 & =\frac{{\rm i}(cW_{j}-Y_{j})}{2\pi b_{n}^{2}}\int te^{{\rm i}t\left(\frac{X_{j}-X_{k}}{b_{n}}\right)}\frac{|K^{\mathrm{ft}}(t)|^{2}}{|f_{\epsilon}^{\mathrm{ft}}(t/b_{n})|^{2}}\left\{ 1+\Pi_{l}(t/b_{n})+\Pi_{l}(-t/b_{n})\right\} dt,
\end{align*}
where the third step follows from the change of variable $\tilde{x}=x/b_{n}$
and the last step follows from the property of Dirac delta function.
Using the fact that $K^{{\rm ft}}$ is supported on $[-1,1]$ under
Assumption (3), this implies
\begin{align*}
E[|R_{n,1}|] & \le\frac{2(n-1)}{n^{2}}\{E[|q_{n}(a_{1},a_{1},a_{2})|]+E[|q_{n}(a_{1},a_{2},a_{2})|]\}=O\left(\frac{\max\{1,\sup_{|t|\le b_{n}^{-1}}E[|\Pi_{1}(t)|]\}}{nb_{n}^{2}\{\inf_{|t|\le b_{n}^{-1}}|f_{\epsilon}^{{\rm ft}}(t)|\}^{2}}\right),\\
E[|R_{n,2}|] & \le\frac{2(n-1)}{n^{2}}E[|q_{n}(a_{1},a_{2},a_{1})|]=O\left(\frac{\max\{1,\sup_{|t|\le b_{n}^{-1}}E[|(Y_{1}-cW_{1})\Pi_{1}(t)|]\}}{nb_{n}^{2}\{\inf_{|t|\le b_{n}^{-1}}|f_{\epsilon}^{{\rm ft}}(t)|\}^{2}}\right),\\
E[|R_{n,3}|] & \le\frac{2}{n^{2}}E[|q_{n}(a_{1},a_{1},a_{1})|]=O\left(\frac{\max\{1,\sup_{|t|\le b_{n}^{-1}}E[|(Y_{1}-cW_{1})\Pi_{1}(t)|]\}}{n^{2}b_{n}^{2}\{\inf_{|t|\le b_{n}^{-1}}|f_{\epsilon}^{{\rm ft}}(t)|\}^{2}}\right),
\end{align*}
and (\ref{pf:r}) follows by Lemma \ref{lem:pi2} and Assumption (4).

Second, under Assumption (4), we have
\begin{equation}
B_{n}-\theta_{c}=o(n^{-1/2}),\label{pf:b}
\end{equation}
which follows from
\begin{align*}
 & E[p_{n}(a_{1},a_{2},a_{3})]=2E[q_{n}(a_{1},a_{2},a_{3})]
 \\
 =&-\frac{2}{b_{n}^{3}}\int E\left[(Y-cW)\mathbb{K}\left(\frac{x-X}{b_{n}}\right)\right]E\left[\mathbb{K}^{\prime}\left(\frac{x-X}{b_{n}}\right)\right]dx\\
= & -2\int\left\{ \frac{1}{b_{n}}E\left[(Y-cW)K\left(\frac{x-X^{*}}{b_{n}}\right)\right]\right\} \left\{ \frac{1}{b_{n}^{2}}E\left[K^{\prime}\left(\frac{x-X^{*}}{b_{n}}\right)\right]\right\} dx\\
= & \underbrace{-2\int h_{c}(x)f^{\prime}(x)dx}_{\theta_{c}}-2\int\left\{ \begin{array}{c}
h(x)\frac{b_{n}^{\alpha-1}}{(\alpha-1)!}\int K(u)\{\overbrace{f^{(\alpha)}(x+b_{n}\bar{u}_{f})-f^{(\alpha)}(x)}^{\le m(x)b_{n}|u|}\}u^{\alpha-1}du\\
+f^{\prime}(x)\frac{b_{n}^{\alpha}}{\alpha!}\int K(u)\{\underbrace{h_{c}^{(\alpha)}(x+b_{n}\bar{u}_{h})-h_{c}^{(\alpha)}(x)}_{\le m(x)b_{n}|u|}\}u^{\alpha}du\\
+\left\{ \begin{array}{c}
\frac{b_{n}^{\alpha-1}}{(\alpha-1)!}\int K(u)\{f^{(\alpha)}(x+b_{n}\bar{u}_{f})-f^{(\alpha)}(x)\}u^{\alpha-1}du\\
\times\frac{b_{n}^{\alpha}}{\alpha!}\int K(u)\{h_{c}^{(\alpha)}(x+b_{n}\bar{u}_{h})-h_{c}^{(\alpha)}(x)\}u^{\alpha}du
\end{array}\right\} 
\end{array}\right\} dx\\
= & \theta_{c}+O(b_{n}^{\alpha})
\end{align*}
for some $\bar{u}_{h}$ and $\bar{u}_{f}$ such that $\max\{|\bar{u}_{h}|,|\bar{u}_{f}|\}\le|u|$,
where the second step follows from $E[\Pi_{l}(t/b_{n})]=0$, the third
step follows from Lemma \ref{lem:bias1}, the fourth step follows
from Lemma \ref{lem:bias2}, and the last step follows from the Lipschitz
conditions on $h_{c}^{(\alpha)}$ and $f^{(\alpha)}$ under Assumption
(2). 

Also, note by Lemma A.3 of Ahn and Powell (1993) that
\begin{equation}
U_{n}-U_{n}^{*}=o_{p}(n^{-1/2})\label{pf:u}
\end{equation}
if $E[|p_{n}(a_{j},a_{k},a_{l})|^{2}]=o(n)$, which follows from Assumption
(4), Lemma \ref{lem:pi2}, and
\begin{align*}
 & E[|p_{n}(a_{j},a_{k},a_{l})|^{2}]\le4E[|q_{n}(a_{j},a_{k},a_{l})|^{2}]\\
= & 4E\left[\left|\frac{{\rm i}(cW_{j}-Y_{j})}{2\pi b_{n}^{2}}\int te^{{\rm i}t\left(\frac{X_{j}-X_{k}}{b_{n}}\right)}\frac{|K^{\mathrm{ft}}(t)|^{2}}{|f_{\epsilon}^{\mathrm{ft}}(t/b_{n})|^{2}}\left\{ 1+\Pi_{l}(t/b_{n})+\Pi_{l}(-t/b_{n})\right\} dt\right|^{2}\right]\\
\le & \frac{E[|Y_{j}-cW_{j}|^{2}]}{b_{n}^{4}}\iint\left\{ \begin{array}{c}
|t_{1}t_{2}|\frac{|K^{\mathrm{ft}}(t_{1})|^{2}|K^{\mathrm{ft}}(t_{2})|^{2}}{|f_{\epsilon}^{\mathrm{ft}}(t_{1}/b_{n})|^{2}|f_{\epsilon}^{\mathrm{ft}}(t_{2}/b_{n})|^{2}}\\
\times E\left[\begin{array}{c}
\left\{ 1+|\Pi_{l}(t_{1}/b_{n})|+|\Pi_{l}(-t_{1}/b_{n})|\right\} \\
\times\left\{ 1+|\Pi_{l}(t_{2}/b_{n})|+|\Pi_{l}(-t_{2}/b_{n})|\right\} 
\end{array}\right]
\end{array}\right\} dt_{1}dt_{2}\\
= & O\left(\frac{\max\{1,\sup_{|t|\le b_{n}^{-1}}E[|\Pi_{1}(t)|^{2}]\}}{b_{n}^{4}\{\inf_{|t|\le b_{n}^{-1}}|f_{\epsilon}^{{\rm ft}}(t)|\}^{4}}\right),
\end{align*}
where the second step follows by $k\neq l$ and the last step follows
by the fact that $K^{{\rm ft}}$ is supported on $[-1,1]$ as in Assumption
(3) and the Cauchy-Schwarz inequality.

Finally, observe that
\begin{align*}
 & 3r_{n}(a_{j})=E_{j}\left[\begin{array}{c}
q_{n}(a_{j},a_{k},a_{l})+q_{n}(a_{j},a_{l},a_{k})\\
+q_{n}(a_{k},a_{j},a_{l})+q_{n}(a_{l},a_{j},a_{k})\\
+q_{n}(a_{k},a_{l},a_{j})+q_{n}(a_{l},a_{k},a_{j})
\end{array}\right]=2\left\{ \begin{array}{c}
E_{j}[q_{n}(a_{j},a_{k},a_{l})]\\
+E_{j}[q_{n}(a_{k},a_{j},a_{l})]\\
+E_{j}[q_{n}(a_{k},a_{l},a_{j})]
\end{array}\right\} \\
= & \underbrace{(-2)b_{n}^{-3}\int\left\{ \begin{array}{c}
E\left[(Y-cW)\mathbb{K}\left(\frac{x-X}{b_{n}}\right)\right]E\left[\mathbb{K}^{\prime}\left(\frac{x-X}{b_{n}}\right)\right]\\
-\left\{ \frac{1}{2\pi}\int e^{-{\rm i}tx/b_{n}}h_{c}^{{\rm ft}}(t/b_{n})E[\Pi_{j}^{*}(t/b_{n})]K^{{\rm ft}}(t)dt\right\} E\left[\mathbb{K}^{\prime}\left(\frac{x-X}{b_{n}}\right)\right]\\
-E\left[(Y-cW)\mathbb{K}\left(\frac{x-X}{b_{n}}\right)\right]\left\{ \frac{-{\rm i}}{2\pi}\int te^{-{\rm i}tx/b_{n}}f^{{\rm ft}}(t/b_{n})E[\Pi_{j}^{*}(t/b_{n})]K^{{\rm ft}}(t)dt\right\} 
\end{array}\right\} dx}_{\eqqcolon c_{r}}\\
 & +\underbrace{(-2)b_{n}^{-3}\int\left\{ \begin{array}{c}
(Y_{j}-cW_{j})\mathbb{K}\left(\frac{x-X_{j}}{b_{n}}\right)E\left[\mathbb{K}^{\prime}\left(\frac{x-X}{b_{n}}\right)\right]+E\left[(Y-cW)\mathbb{K}\left(\frac{x-X}{b_{n}}\right)\right]\mathbb{K}^{\prime}\left(\frac{x-X_{j}}{b_{n}}\right)\\
+\left\{ \frac{1}{2\pi}\int e^{-{\rm i}tx/b_{n}}h_{c}^{{\rm ft}}(t/b_{n})\Pi_{j}^{*}(t/b_{n})K^{{\rm ft}}(t)dt\right\} E\left[\mathbb{K}^{\prime}\left(\frac{x-X}{b_{n}}\right)\right]\\
+E\left[(Y-cW)\mathbb{K}\left(\frac{x-X}{b_{n}}\right)\right]\left\{ \frac{-{\rm i}}{2\pi}\int te^{-{\rm i}tx/b_{n}}f^{{\rm ft}}(t/b_{n})\Pi_{j}^{*}(t/b_{n})K^{{\rm ft}}(t)dt\right\} 
\end{array}\right\} dx}_{\eqqcolon r_{n}^{*}(a_{j})},
\end{align*}
where the last step follows from $\Pi_{j}(t/b_{n})=\Pi_{j}^{*}(t)-E[\Pi_{j}^{*}(t)]$
(so $E[\Pi_{j}(t/b_{n})]=0$) with $\Pi_{j}^{*}(t)=-\frac{\mu_{1,j}(t)}{\mu_{1}(t)}+{\rm i}\int_{0}^{t}\left\{ -\frac{\mu_{3}(s)\mu_{2,j}(s)}{\mu_{2}^{2}(s)}+\frac{\mu_{3,j}(s)}{\mu_{2}(s)}\right\} ds$.

Since $c_{r}$ is non-stochastic, to characterize the behavior of
$U_{n}^{*}$, it is sufficient to focus on $r_{n}^{*}(a_{j})$, for
which we have
\begin{align*}
 & r_{n}^{*}(a_{j})=2b_{n}^{-3}\int\left\{ \begin{array}{c}
\int\mathbb{K}\left(\frac{x-X_{j}}{b_{n}}\right)E\left[(Y-cW)\mathbb{K}^{\prime}\left(\frac{x-X}{b_{n}}\right)\right]\\
-(Y_{j}-cW_{j})\mathbb{K}\left(\frac{x-X_{j}}{b_{n}}\right)E\left[\mathbb{K}^{\prime}\left(\frac{x-X}{b_{n}}\right)\right]\\
-\left\{ \frac{1}{2\pi}\int e^{-{\rm i}tx/b_{n}}f^{{\rm ft}}(t/b_{n})\Pi_{j}^{*}(t/b_{n})K^{{\rm ft}}(t)dt\right\} E\left[(Y-cW)\mathbb{K}^{\prime}\left(\frac{x-X}{b_{n}}\right)\right]\\
+\left\{ \frac{1}{2\pi}\int e^{-{\rm i}tx/b_{n}}h_{c}^{{\rm ft}}(t/b_{n})\Pi_{j}^{*}(t/b_{n})K^{{\rm ft}}(t)dt\right\} E\left[\mathbb{K}^{\prime}\left(\frac{x-X}{b_{n}}\right)\right]
\end{array}\right\} dx\\
= & 2b_{n}^{-1}\int\left\{ \begin{array}{c}
\int\mathbb{K}\left(\frac{x-X_{j}}{b_{n}}\right)\left\{ b_{n}^{-2}E\left[(Y-cW)\mathbb{K}^{\prime}\left(\frac{x-X}{b_{n}}\right)\right]\right\} \\
-(Y_{j}-cW_{j})\mathbb{K}\left(\frac{x-X_{j}}{b_{n}}\right)\left\{ b_{n}^{-2}E\left[\mathbb{K}^{\prime}\left(\frac{x-X}{b_{n}}\right)\right]\right\} \\
-\left\{ \frac{1}{2\pi}\int e^{-{\rm i}tx/b_{n}}f^{{\rm ft}}(t/b_{n})\Pi_{j}^{*}(t/b_{n})K^{{\rm ft}}(t)dt\right\} \left\{ b_{n}^{-2}E\left[(Y-cW)\mathbb{K}^{\prime}\left(\frac{x-X}{b_{n}}\right)\right]\right\} \\
+\left\{ \frac{1}{2\pi}\int e^{-{\rm i}tx/b_{n}}h_{c}^{{\rm ft}}(t/b_{n})\Pi_{j}^{*}(t/b_{n})K^{{\rm ft}}(t)dt\right\} \left\{ b_{n}^{-2}E\left[\mathbb{K}^{\prime}\left(\frac{x-X}{b_{n}}\right)\right]\right\} 
\end{array}\right\} dx\\
= & \frac{1}{\pi}\int\left\{ \begin{array}{c}
\int\left\{ \{h_{c}^{\prime}\}^{{\rm ft}}(-t)-(Y_{j}-cW_{j})\{f^{\prime}\}^{{\rm ft}}(-t)\right\} \frac{e^{{\rm i}tX_{j}}}{f_{\epsilon}^{{\rm ft}}(t)}\\
+\left\{ f^{{\rm ft}}(t)\{h_{c}^{\prime}\}^{{\rm ft}}(-t)-\{f^{\prime}\}^{{\rm ft}}(-t)h_{c}^{{\rm ft}}(t)\right\} \Pi_{j}^{*}(t)
\end{array}\right\} dt +v_{n,1}(a_{j})+v_{n,2}(a_{j}),
\end{align*}
where the first step uses the integration by parts and $v_{n,1}(a_{j})$
and $v_{n,2}(a_{j})$ are defined as
\begin{gather*}
v_{n,1}(a_{j})=2b_{n}^{-1}\int\left\{ \begin{array}{c}
\left\{ \begin{array}{c}
\frac{1}{2\pi}\int e^{-{\rm i}tx/b_{n}}\left[\frac{e^{{\rm i}tX_{j}/b_{n}}}{f_{\epsilon}^{{\rm ft}}(t/b_{n})}+f^{{\rm ft}}(t/b_{n})\Pi_{j}^{*}(t/b_{n})\right]K^{{\rm ft}}(t)dt\\
\times\left\{ b_{n}^{-2}E\left[(Y-cW)\mathbb{K}^{\prime}\left(\frac{x-X}{b_{n}}\right)\right]-h_{c}^{\prime}(x)\right\} 
\end{array}\right\} \\
-\left\{ \begin{array}{c}
\frac{1}{2\pi}\int e^{-{\rm i}tx/b_{n}}\left[\frac{(Y_{j}-cW_{j})e^{{\rm i}tX_{j}/b_{n}}}{f_{\epsilon}^{{\rm ft}}(t/b_{n})}+h_{c}^{{\rm ft}}(t/b_{n})\Pi_{j}^{*}(t/b_{n})\right]K^{{\rm ft}}(t)dt\\
\times\left\{ b_{n}^{-2}E\left[\mathbb{K}^{\prime}\left(\frac{x-X}{b_{n}}\right)\right]-f^{\prime}(x)\right\} 
\end{array}\right\} 
\end{array}\right\} dx,\\
v_{n,2}(a_{j})=\frac{1}{\pi}\int\left\{ \begin{array}{c}
\int\left\{ \{h_{c}^{\prime}\}^{{\rm ft}}(-t)-(Y_{j}-cW_{j})\{f^{\prime}\}^{{\rm ft}}(-t)\right\} \frac{e^{{\rm i}tX_{j}}}{f_{\epsilon}^{{\rm ft}}(t)}\\
+\left\{ f^{{\rm ft}}(t)\{h_{c}^{\prime}\}^{{\rm ft}}(-t)-\{f^{\prime}\}^{{\rm ft}}(-t)h_{c}^{{\rm ft}}(t)\right\} \Pi_{j}^{*}(t)
\end{array}\right\} \left\{ K^{{\rm ft}}(tb_{n})-1\right\} dt.
\end{gather*}
Since $Var[\xi_{c,j}]<\infty$ under Assumption (5), $Var[v_{n,2}(a_{j})]=o(1)$
as $K^{{\rm ft}}(tb_{n})\to1$ as $n\to\infty$, and the conclusion
follows if 
\begin{align}
Var[v_{n,1}(a_{j})] & =o(1).\label{pf:v}
\end{align}
To show (\ref{pf:v}), using Lemma \ref{lem:bias1} and \ref{lem:bias2},
we can write $v_{n,1}(a_{j})=v_{n,1,1}(a_{j})+v_{n,1,2}(a_{j})$,
where
\begin{align*}
v_{n,1,1}(a_{j}) & =\frac{b_{n}^{\alpha-2}}{\pi(\alpha-1)!}\int\left\{ \begin{array}{c}
	\int e^{-{\rm i}tx/b_{n}}\left\{ {\rm i}f^{{\rm ft}}(t/b_{n})\int_{0}^{t/b_{n}}\left\{ -\frac{\{f^{{\rm ft}}\}^{\prime}(s)}{f^{{\rm ft}}(s)}+{\rm i}X_{j}\right\} \frac{e^{{\rm i}sW_{j}}}{f^{{\rm ft}}(s)f_{\nu}^{{\rm ft}}(s)}ds\right\} K^{{\rm ft}}(t)dt\\
	\times\int K(u)\{h_{c}^{(\alpha)}(x+b_{n}\bar{u}_{h})-h_{c}^{(\alpha)}(x)\}u^{\alpha-1}du
\end{array}\right\} dx,
\\
v_{n,1,2}(a_{j}) & =\frac{-b_{n}^{\alpha-2}}{\pi(\alpha-1)!}\int\left\{ \begin{array}{c}
\int e^{-{\rm i}tx/b_{n}}\left\{ \begin{array}{c}
\left\{ (Y_{j}-cW_{j})-\frac{h_{c}^{{\rm ft}}(t/b_{n})}{f^{{\rm ft}}(t/b_{n})}\right\} \frac{e^{{\rm i}tX_{j}/b_{n}}}{f_{\epsilon}^{{\rm ft}}(t/b_{n})}\\
+{\rm i}h_{c}^{{\rm ft}}(t/b_{n})\int_{0}^{t/b_{n}}\left\{ -\frac{\{f^{{\rm ft}}\}^{\prime}(s)}{f^{{\rm ft}}(s)}+{\rm i}X_{j}\right\} \frac{e^{{\rm i}sW_{j}}}{f^{{\rm ft}}(s)f_{\nu}^{{\rm ft}}(s)}ds
\end{array}\right\} K^{{\rm ft}}(t)dt\\
\times\int K(u)\{f^{(\alpha)}(x+b_{n}\bar{u}_{f})-f^{(\alpha)}(x)\}u^{\alpha-1}du
\end{array}\right\} dx.
\end{align*}

For $v_{n,1,1}(a_{j})$, we have
\begin{align*}
 & Var[v_{n,1,1}(a_{j})]\le E[|v_{n,1,1}(a_{j})|^{2}]\\
= & E\left[\left|\frac{b_{n}^{\alpha-1}}{\pi(\alpha-1)!}\int\left\{ \begin{array}{c}
\int e^{-{\rm i}\tilde{t}x}\left\{ {\rm i}f^{{\rm ft}}(\tilde{t})\int_{0}^{\tilde{t}}\left\{ -\frac{\{f^{{\rm ft}}\}^{\prime}(s)}{f^{{\rm ft}}(s)}+{\rm i}X_{j}\right\} \frac{e^{{\rm i}sW_{j}}}{f^{{\rm ft}}(s)f_{\nu}^{{\rm ft}}(s)}ds\right\} K^{{\rm ft}}(\tilde{t}b_{n})d\tilde{t}\\
\times\int K(u)\{h_{c}^{(\alpha)}(x+b_{n}\bar{u}_{h})-h_{c}^{(\alpha)}(x)\}u^{\alpha-1}du
\end{array}\right\} dx\right|^{2}\right]\\
\le & E\left[\left\{ b_{n}^{\alpha-1}\int\left\{ \begin{array}{c}
\int\left\{ \int_{0}^{\tilde{t}}\left\{ \frac{|\{f^{{\rm ft}}\}^{\prime}(s)|}{|f^{{\rm ft}}(s)|}+|X_{j}|\right\} \frac{1}{|f^{{\rm ft}}(s)||f_{\nu}^{{\rm ft}}(s)|}ds\right\} |K^{{\rm ft}}(\tilde{t}b_{n})|d\tilde{t}\\
\times\int K(u)|\underbrace{h_{c}^{(\alpha)}(x+b_{n}\bar{u}_{h})-h_{c}^{(\alpha)}(x)}_{\le m(x)b_{n}|u|}||u|^{\alpha-1}du
\end{array}\right\} dx\right\} ^{2}\right]\\
= & O\left(\frac{b_{n}^{2(\alpha-1)}}{\{\inf_{|t|\le b_{n}^{-1}}|f_{\nu}^{{\rm ft}}(t)|\}^{2}\{\inf_{|t|\le b_{n}^{-1}}|f^{{\rm ft}}(t)|\}^{4}}\right),
\end{align*}
where the second step follows from the change of variables $\tilde{t}=t/b_{n}$
and the last step uses the fact that $K^{{\rm ft}}$ is supported
on $[-1,1]$ as in Assumption (3). By similar argument, we can show
\[
Var[v_{n,1,2}(a_{j})]=O\left(\frac{b_{n}^{2(\alpha-1)}\{\inf_{|t|\le b_{n}^{-1}}|f^{{\rm ft}}(t)|\}^{-2}}{\min\left\{ \{\inf_{|t|\le b_{n}^{-1}}|f_{\epsilon}^{{\rm ft}}(t)|\}^{2},\{\inf_{|t|\le b_{n}^{-1}}|f_{\nu}^{{\rm ft}}(t)|\}^{2}\{\inf_{|t|\le b_{n}^{-1}}|f^{{\rm ft}}(t)|\}^{2}\right\} }\right),
\]
and (\ref{pf:v}) follows from Assumption (4).

\newpage{}

\section{Lemmas }
\begin{lem}
\label{lem:delta} Under Assumption (1), for $\iota=1,2,3$,
\begin{align*}
\sup_{|t|\le b_{n}^{-1}}|\hat{\delta}_{\iota}(t)| & =O_{p}\left(n^{-1/2}\log(1/b_{n})\right).
\end{align*}
\end{lem}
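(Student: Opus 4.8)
The plan is to view $\hat\delta_\iota(t)=\frac1n\sum_{l=1}^n\delta_{\iota,l}(t)$ as a centered empirical process indexed by $t$ over the expanding interval $[-b_n^{-1},b_n^{-1}]$ and to bound its supremum by a maximal inequality for such processes, handling the real and imaginary parts of $\delta_{\iota,l}(t)=\mu_{\iota,l}(t)-\mu_\iota(t)$ separately and recombining via $|\hat\delta_\iota|\le|\mathrm{Re}\,\hat\delta_\iota|+|\mathrm{Im}\,\hat\delta_\iota|$. Each summand depends only on the single observation $(X_l,W_l)$, has mean zero, and has envelope $2$ when $\iota\in\{1,2\}$ and envelope of order $|X_l|$ when $\iota=3$. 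I would record at the outset, from Assumption (1), that $E|X|<\infty$, $EX^2<\infty$, and---by Cauchy--Schwarz with $E|X^*|^{2+\eta},E|\epsilon|^{2+\eta},E|\nu|^{2+\eta}<\infty$---that $E|XW|<\infty$.

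For $\iota\in\{1,2\}$ the index class $\{\cos(t\,\cdot),\sin(t\,\cdot):|t|\le b_n^{-1}\}$ (of functions of $X_l$ when $\iota=1$ and of $W_l$ when $\iota=2$) is uniformly bounded, and since $t\mapsto\cos(tx)$ is Lipschitz in $t$ with constant $|x|$ and $EX^2<\infty$, its $L^2$-covering number is of order $b_n^{-1}/\varepsilon$. Dudley's entropy bound then gives $E\big[\sup_{|t|\le b_n^{-1}}|\sqrt n\,\hat\delta_\iota(t)|\big]\lesssim\int_0^{1}\sqrt{\log(b_n^{-1}/\varepsilon)}\,d\varepsilon=O\big(\sqrt{\log(1/b_n)}\big)$, and Markov's inequality turns this into $\sup_{|t|\le b_n^{-1}}|\hat\delta_\iota(t)|=O_p\big(n^{-1/2}\sqrt{\log(1/b_n)}\big)=O_p\big(n^{-1/2}\log(1/b_n)\big)$. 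An elementary alternative, avoiding chaining, discretizes $[-b_n^{-1},b_n^{-1}]$ on a grid of polynomially many points---the between-grid error being $O_p(\delta_n)$ and negligible because $t\mapsto\hat\delta_\iota(t)$ is Lipschitz with $O_p(1)$ constant $\tfrac1n\sum_l|X_l|$ by the law of large numbers---and applies Hoeffding's inequality with a union bound over the grid.

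For $\iota=3$, which I expect to be the main obstacle, the envelope $|X_l|$ is unbounded with only a $(2+\eta)$-th moment; this is precisely why Assumption (1) imposes $E|X^*|^{2+\eta}<\infty$ and $E|\epsilon|^{2+\eta}<\infty$ rather than just square-integrability. I would split $\hat\delta_3(t)$ at a truncation level $\tau_n$: on $\{|X_l|\le\tau_n\}$ the summands are bounded by $\tau_n$ with $O(1)$ variance, so Bernstein's inequality---combined with the same covering-number estimate, the $L^2$-Lipschitz constant now being $\tau_n\|W\|_2<\infty$---delivers a uniform bound of order $n^{-1/2}\log(1/b_n)$ as long as $\tau_n\,n^{-1/2}\log(1/b_n)$ stays dominated by the variance term, i.e.\ $\tau_n\ll n^{1/2}/\log(1/b_n)$; the discarded remainder $\tfrac1n\sum_l|X_l|\,\mathbf 1\{|X_l|>\tau_n\}$ has mean at most $\tau_n^{-(1+\eta)}E|X|^{2+\eta}=o(n^{-1/2})$ provided $\tau_n\gg n^{1/(2(1+\eta))}$. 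Since $\eta>0$, a $\tau_n$ satisfying both constraints exists, and the same truncation controls the Lipschitz constant $\tfrac1n\sum_l|X_lW_l|$ in the grid version. The remaining ingredients---the covering-number estimates, the Hoeffding/Bernstein applications, the law-of-large-numbers bound on the modulus of continuity, and the recombination---are routine; the extra $\log n$ factor that the grid/truncation introduces is absorbed into $\log(1/b_n)$ by the (at least polynomial) decay of $b_n$ that holds whenever the lemma is applied.
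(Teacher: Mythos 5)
Your proposal is correct in substance, but it takes a different route from the paper: the paper does not prove the lemma at all, its ``proof'' is a one-line citation to Lemma 2 of Kurisu and Otsu (2022), whereas you reconstruct the underlying argument from scratch. What you write is essentially the standard machinery behind such results: symmetrization/chaining (or grid discretization plus Hoeffding/Bernstein) over the expanding index set $|t|\le b_n^{-1}$, with the $2+\eta$ moment conditions of Assumption (1) used exactly where they should be, namely to truncate the unbounded envelope $|X_l|$ in the $\iota=3$ term at a level $\tau_n$ with $n^{1/(2(1+\eta))}\ll\tau_n\ll n^{1/2}/\log(1/b_n)$, and to control the empirical Lipschitz constants $\tfrac1n\sum_l|X_l|$ and $\tfrac1n\sum_l|X_lW_l|$ via the law of large numbers. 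The citation buys brevity and delegates the bookkeeping of constants; your version buys self-containedness and makes transparent why Assumption (1) demands $2+\eta$ moments rather than just square-integrability.

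One caveat deserves to be stated explicitly rather than left to the closing sentence. The chaining bound for $\iota=1,2$ gives $O_p(n^{-1/2}\sqrt{\log(1/b_n)})$, which implies the claim whenever $b_n\to0$. But the grid/union-bound variant, and your truncated Bernstein argument for $\iota=3$, introduce entropy of order $\log n$ (from the polynomially many grid points and from $\log\tau_n$), so they deliver the stated rate $n^{-1/2}\log(1/b_n)$ only when $\log n\lesssim\log(1/b_n)^{2}$, e.g.\ under polynomial decay of $b_n$. The lemma as written is conditioned only on Assumption (1), which contains no such restriction; the justification you gesture at is really Assumption (4), whose requirement $n^{1/2}b_n^{\alpha}\to0$ forces $\log(1/b_n)\ge\tfrac{1}{2\alpha}\log n$ eventually, so the absorption of the extra $\log n$ is legitimate everywhere the lemma is actually invoked in the paper. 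Flagging that dependence (or simply restricting attention to bandwidths satisfying Assumption (4)) would close the only gap between what you prove and what the lemma literally asserts.
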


\begin{proof}
See Lemma 2 in Kurisu and Otsu (2022).
\end{proof}
\begin{lem}
\label{lem:pi1} Under Assumptions (1) and (4), 
\begin{gather*}
\sup_{|t|\le b_{n}^{-1}}|\hat{\Pi}(t)|=O_{p}\left(\frac{n^{-1/2}\log(1/b_{n})\{\inf_{|t|\le b_{n}^{-1}}|f^{{\rm ft}}(t)|\}^{-1}}{\min\left\{ \inf_{|t|\le b_{n}^{-1}}|f_{\epsilon}^{{\rm ft}}(t)|,\{\inf_{|t|\le b_{n}^{-1}}|f_{\nu}^{{\rm ft}}(t)|\}^{2}\inf_{|t|\le b_{n}^{-1}}|f^{{\rm ft}}(t)|b_{n}\right\} }\right),\\
\sup_{|t|\le b_{n}^{-1}}|\hat{\Pi}^{{\rm res}}(t)|=O_{p}\left(\frac{n^{-1}\log(1/b_{n})^{2}\{\inf_{|t|\le b_{n}^{-1}}|f^{{\rm ft}}(t)|\}^{-1}}{\min\left\{ \inf_{|t|\le b_{n}^{-1}}|f_{\epsilon}^{{\rm ft}}(t)|,\{\inf_{|t|\le b_{n}^{-1}}|f_{\nu}^{{\rm ft}}(t)|\}^{4}\{\inf_{|t|\le b_{n}^{-1}}|f^{{\rm ft}}(t)|\}^{3}b_{n}^{2}\right\} }\right).
\end{gather*}
\end{lem}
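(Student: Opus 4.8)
The plan is to bound $\sup_{|t|\le b_n^{-1}}|\hat{\Pi}(t)|$ and $\sup_{|t|\le b_n^{-1}}|\hat{\Pi}^{\mathrm{res}}(t)|$ termwise, using three ingredients: (i) the identities $\mu_1(t)=f^{\mathrm{ft}}(t)f_\epsilon^{\mathrm{ft}}(t)$ and $\mu_2(t)=f^{\mathrm{ft}}(t)f_\nu^{\mathrm{ft}}(t)$, which hold because $(\epsilon,\nu)$ are independent of $X^{\ast}$ under Assumption (1), together with the crude bound $\sup_t|\mu_3(t)|\le E[|X|]<\infty$; (ii) Lemma \ref{lem:delta}, which gives $\sup_{|t|\le b_n^{-1}}|\hat{\delta}_\iota(t)|=O_p(n^{-1/2}\log(1/b_n))$ for $\iota=1,2,3$; and (iii) the consequence of Assumption (4) that $n^{-1/2}\log(1/b_n)$ is of smaller order than both $\{\inf|f^{\mathrm{ft}}|\}\{\inf|f_\epsilon^{\mathrm{ft}}|\}$ and $b_n\{\inf|f^{\mathrm{ft}}|\}^2\{\inf|f_\nu^{\mathrm{ft}}|\}^2$, where here and below every infimum is over $|t|\le b_n^{-1}$. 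From (i) one gets $\inf|\mu_1|\ge\{\inf|f^{\mathrm{ft}}|\}\{\inf|f_\epsilon^{\mathrm{ft}}|\}$, likewise for $\mu_2$, and $\sup_{|s|\le b_n^{-1}}|\mu_3(s)/\mu_2^2(s)|=O(\{\inf|f^{\mathrm{ft}}|\}^{-2}\{\inf|f_\nu^{\mathrm{ft}}|\}^{-2})$; from (iii), on an event of probability tending to one, $\inf|\hat{\mu}_1|\ge\tfrac{1}{2}\inf|\mu_1|$, $\inf|\hat{\mu}_2|\ge\tfrac{1}{2}\inf|\mu_2|$, and $\sup_{|t|\le b_n^{-1}}|\bar{\phi}(t)|=o_p(1)$, so $\sup_{|t|\le b_n^{-1}}|e^{\bar{\phi}(t)}|=O_p(1)$.

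For $\hat{\Pi}(t)=-\hat{\delta}_1(t)/\mu_1(t)+\mathrm{i}\int_0^t\{-\mu_3(s)\hat{\delta}_2(s)/\mu_2^2(s)+\hat{\delta}_3(s)/\mu_2(s)\}\,ds$, I would bound the first term by $\sup|\hat{\delta}_1|/\inf|\mu_1|$ and, since the integration path has length at most $b_n^{-1}$ when $|t|\le b_n^{-1}$, bound the integral by $b_n^{-1}$ times the supremum over $|s|\le b_n^{-1}$ of the integrand modulus. Invoking the ingredients above, the first term is $O_p(n^{-1/2}\log(1/b_n)\{\inf|f^{\mathrm{ft}}|\}^{-1}\{\inf|f_\epsilon^{\mathrm{ft}}|\}^{-1})$ and the integral is $O_p(n^{-1/2}\log(1/b_n)\,b_n^{-1}\{\inf|f^{\mathrm{ft}}|\}^{-2}\{\inf|f_\nu^{\mathrm{ft}}|\}^{-2})$, the $\hat{\delta}_3/\mu_2$ piece of the integrand being of strictly smaller order because every characteristic-function infimum is at most one. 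Taking the maximum of the two bounds and factoring out $n^{-1/2}\log(1/b_n)\{\inf|f^{\mathrm{ft}}|\}^{-1}$ yields precisely the $\min\{\cdot,\cdot\}$ appearing in the statement.

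For $\hat{\Pi}^{\mathrm{res}}(t)$, which collects the second-order terms in the expansion of $(\hat{\mu}_1,\hat{\mu}_2,\hat{\mu}_3)$ around $(\mu_1,\mu_2,\mu_3)$, I would treat its four summands one at a time. After writing $\mu_\iota+\hat{\delta}_\iota=\hat{\mu}_\iota$ and using the lower bounds on $|\hat{\mu}_1|,|\hat{\mu}_2|$ and $|e^{\bar{\phi}}|=O_p(1)$, each summand is a product of two factors, each already controlled exactly as in the $\hat{\Pi}$ analysis, and is therefore of order $(n^{-1/2}\log(1/b_n))^2=n^{-1}\log(1/b_n)^2$ up to powers of the characteristic-function infima and of $b_n^{-1}$. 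The purely-$\hat{\delta}_1$ summand $\hat{\delta}_1^2/\hat{\mu}_1$ carries the factor $\{\inf|f^{\mathrm{ft}}|\}^{-1}\{\inf|f_\epsilon^{\mathrm{ft}}|\}^{-1}$, the squared-integral summand in the last line carries $b_n^{-2}\{\inf|f^{\mathrm{ft}}|\}^{-4}\{\inf|f_\nu^{\mathrm{ft}}|\}^{-4}$, and one checks that the remaining two summands are dominated by one of these; taking the maximum and factoring out $n^{-1}\log(1/b_n)^2\{\inf|f^{\mathrm{ft}}|\}^{-1}$ gives the claimed bound.

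The main obstacle is not any single estimate — each reduces to Lemma \ref{lem:delta}, the product-of-infima lower bounds, and the elementary inequalities $|f^{\mathrm{ft}}|,|f_\epsilon^{\mathrm{ft}}|,|f_\nu^{\mathrm{ft}}|\le 1$ and $b_n\le 1$ — but rather, first, the preliminary step of extracting from the unwieldy Assumption (4) that $\hat{\mu}_1,\hat{\mu}_2$ stay bounded away from zero uniformly over $|t|\le b_n^{-1}$ and that $\bar{\phi}$ stays small, since this is what legitimizes replacing $\mu_\iota+\hat{\delta}_\iota$ by $\mu_\iota$ and $e^{\bar{\phi}}$ by $O_p(1)$ in denominators and prefactors; and second, the bookkeeping of tracking the powers of $\inf|f^{\mathrm{ft}}|$, $\inf|f_\epsilon^{\mathrm{ft}}|$, $\inf|f_\nu^{\mathrm{ft}}|$ and $b_n^{-1}$ through the several summands of $\hat{\Pi}^{\mathrm{res}}$ and verifying that none of them exceeds the stated rate.
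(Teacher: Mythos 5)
Your proposal is correct and follows essentially the same route as the paper's proof: bound $\hat{\Pi}$ and $\hat{\Pi}^{\mathrm{res}}$ termwise via Lemma \ref{lem:delta}, the lower bounds $\inf|\mu_{1}|\ge\inf|f_{\epsilon}^{\mathrm{ft}}|\inf|f^{\mathrm{ft}}|$ and $\inf|\mu_{2}|\ge\inf|f_{\nu}^{\mathrm{ft}}|\inf|f^{\mathrm{ft}}|$, $\sup|\mu_{3}(t)|=O(1)$, the path-length factor $b_{n}^{-1}$ for the integrals, and Assumption (4) to keep $\mu_{\iota}+\hat{\delta}_{\iota}$ away from zero and to make $e^{\bar{\phi}}=O_{p}(1)$. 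Your assertion that the two cross summands of $\hat{\Pi}^{\mathrm{res}}$ are dominated by the leading two is stated without verification, but the paper's own final step makes the identical leap, so you are at the same level of detail as the published argument.
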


\begin{proof}
The first statement follows by
\begin{eqnarray*}
  \sup_{|t|\le b_{n}^{-1}}|\hat{\Pi}(t)|
 & = & O_{p}\left(\frac{\sup_{|t|\le b_{n}^{-1}}|\hat{\delta}_{1}(t)|}{\inf_{|t|\le b_{n}^{-1}}|\mu_{1}(t)|}+b_{n}^{-1}\left\{ \frac{\sup_{|t|\le b_{n}^{-1}}|\mu_{3}(t)|\sup_{|t|\le b_{n}^{-1}}|\hat{\delta}_{2}(t)|}{\{\inf_{|t|\le b_{n}^{-1}}|\mu_{2}(t)|\}^{2}}+\frac{\sup_{|t|\le b_{n}^{-1}}|\hat{\delta}_{3}(t)|}{\inf_{|t|\le b_{n}^{-1}}|\mu_{2}(t)|}\right\} \right)\\
 & = & O_{p}\left(\frac{n^{-1/2}\log(1/b_{n})\{\inf_{|t|\le b_{n}^{-1}}|f^{{\rm ft}}(t)|\}^{-1}}{\min\left\{ \inf_{|t|\le b_{n}^{-1}}|f_{\epsilon}^{{\rm ft}}(t)|,\{\inf_{|t|\le b_{n}^{-1}}|f_{\nu}^{{\rm ft}}(t)|\}^{2}\inf_{|t|\le b_{n}^{-1}}|f^{{\rm ft}}(t)|b_{n}\right\} }\right),
\end{eqnarray*}
where the last step uses Lemma \ref{lem:delta}, $\frac{\inf_{|t|\le b_{n}^{-1}}|\mu_{1}(t)|}{\inf_{|t|\le b_{n}^{-1}}|f_{\epsilon}^{{\rm ft}}(t)|\inf_{|t|\le b_{n}^{-1}}|f^{{\rm ft}}(t)|}\ge1$,
$\frac{\inf_{|t|\le b_{n}^{-1}}|\mu_{2}(t)|}{\inf_{|t|\le b_{n}^{-1}}|f_{\nu}^{{\rm ft}}(t)|\inf_{|t|\le b_{n}^{-1}}|f^{{\rm ft}}(t)|}\ge1$,
and $\sup_{|t|\le b_{n}^{-1}}|\mu_{3}(t)|=O(1)$ under Assumption
(1).

For the second statement, observe that,
\begin{align*}
 \sup_{|t|\le b_{n}^{-1}}|\bar{\phi}(t)|
= & O_{p}\left(\begin{array}{c}
b_{n}^{-1}\left\{ \frac{\sup_{|t|\le b_{n}^{-1}}|\mu_{3}(t)|\sup_{|t|\le b_{n}^{-1}}|\hat{\delta}_{2}(t)|}{\{\inf_{|t|\le b_{n}^{-1}}|\mu_{2}(t)|\}^{2}}+\frac{\sup_{|t|\le b_{n}^{-1}}|\hat{\delta}_{3}(t)|}{\inf_{|t|\le b_{n}^{-1}}|\mu_{2}(t)|}\right\} \\
\times\left\{ 1+\frac{\sup_{|t|\le b_{n}^{-1}}|\hat{\delta}_{2}(t)|}{\inf_{|t|\le b_{n}^{-1}}|\mu_{2}(t)+\hat{\delta}_{2}(t)|}\right\} 
\end{array}\right),\\
= & O_{p}\left(\frac{n^{-1/2}b_{n}^{-1}\log(1/b_{n})}{\{\inf_{|t|\le b_{n}^{-1}}|f_{\nu}^{{\rm ft}}(t)|\}^{2}\{\inf_{|t|\le b_{n}^{-1}}|f^{{\rm ft}}(t)|\}^{2}}\right)=o_{p}(1),
\end{align*}
where the second step uses Lemma \ref{lem:delta}, $\frac{\inf_{|t|\le b_{n}^{-1}}|\mu_{2}(t)|}{\inf_{|t|\le b_{n}^{-1}}|f_{\nu}^{{\rm ft}}(t)|\inf_{|t|\le b_{n}^{-1}}|f^{{\rm ft}}(t)|}\ge1$,
and $\sup_{|t|\le b_{n}^{-1}}|\mu_{3}(t)|=O(1)$ under Assumption
(1), and the last step follows from Assumption (4), which implies
$\sup_{|t|\le b_{n}^{-1}}e^{|\bar{\phi}(t)|}=O_{p}(1)$. The conclusion
then follows by
\begin{align*}
\sup_{|t|\le b_{n}^{-1}}|\hat{\Pi}^{{\rm res}}(t)| & =O_{p}\left(\begin{array}{c}
\frac{\{\sup_{|t|\le b_{n}^{-1}}|\hat{\delta}_{1}(t)|\}^{2}}{\inf_{|t|\le b_{n}^{-1}}|\mu_{1}(t)+\hat{\delta}_{1}(t)|}+b_{n}^{-1}\left\{ \frac{\sup_{|t|\le b_{n}^{-1}}|\mu_{3}(t)|\sup_{|t|\le b_{n}^{-1}}|\hat{\delta}_{2}(t)|}{\{\inf_{|t|\le b_{n}^{-1}}|\mu_{2}(t)|\}^{2}}+\frac{\sup_{|t|\le b_{n}^{-1}}|\hat{\delta}_{3}(t)|}{\inf_{|t|\le b_{n}^{-1}}|\mu_{2}(t)|}\right\} \\
\times\left\{ \begin{array}{c}
\frac{\sup_{|t|\le b_{n}^{-1}}|\hat{\delta}_{1}(t)|}{\inf_{|t|\le b_{n}^{-1}}|\mu_{1}(t)|}+\frac{\{\sup_{|t|\le b_{n}^{-1}}|\hat{\delta}_{1}(t)|\}^{2}}{\inf_{|t|\le b_{n}^{-1}}|\mu_{1}(t)+\hat{\delta}_{1}(t)|}+\frac{\sup_{|t|\le b_{n}^{-1}}|\hat{\delta}_{2}(t)|}{\inf_{|t|\le b_{n}^{-1}}|\mu_{2}(t)+\hat{\delta}_{2}(t)|}\\
+\frac{\sup_{|t|\le b_{n}^{-1}}|\hat{\delta}_{2}(t)|}{\inf_{|t|\le b_{n}^{-1}}|\mu_{2}(t)+\hat{\delta}_{2}(t)|}\left\{ \frac{\sup_{|t|\le b_{n}^{-1}}|\hat{\delta}_{1}(t)|}{\inf_{|t|\le b_{n}^{-1}}|\mu_{1}(t)|}+\frac{\{\sup_{|t|\le b_{n}^{-1}}|\hat{\delta}_{1}(t)|\}^{2}}{\inf_{|t|\le b_{n}^{-1}}|\mu_{1}(t)+\hat{\delta}_{1}(t)|}\right\} 
\end{array}\right\} \\
+b_{n}^{-2}\left\{ \frac{\sup_{|t|\le b_{n}^{-1}}|\mu_{3}(t)|\sup_{|t|\le b_{n}^{-1}}|\hat{\delta}_{2}(t)|}{\{\inf_{|t|\le b_{n}^{-1}}|\mu_{2}(t)|\}^{2}}+\frac{\sup_{|t|\le b_{n}^{-1}}|\hat{\delta}_{3}(t)|}{\inf_{|t|\le b_{n}^{-1}}|\mu_{2}(t)|}\right\} ^{2}\\
\times\left\{ 1+\frac{\sup_{|t|\le b_{n}^{-1}}|\hat{\delta}_{2}(t)|}{\inf_{|t|\le b_{n}^{-1}}|\mu_{2}(t)+\hat{\delta}_{2}(t)|}\right\} ^{2}\left\{ 1+\frac{\sup_{|t|\le b_{n}^{-1}}|\hat{\delta}_{1}(t)|}{\inf_{|t|\le b_{n}^{-1}}|\mu_{1}(t)|}+\frac{\{\sup_{|t|\le b_{n}^{-1}}|\hat{\delta}_{1}(t)|\}^{2}}{\inf_{|t|\le b_{n}^{-1}}|\mu_{1}(t)+\hat{\delta}_{1}(t)|}\right\} 
\end{array}\right)\\
 & =O_{p}\left(\frac{n^{-1}\log(1/b_{n})^{2}\{\inf_{|t|\le b_{n}^{-1}}|f^{{\rm ft}}(t)|\}^{-1}}{\min\left\{ \inf_{|t|\le b_{n}^{-1}}|f_{\epsilon}^{{\rm ft}}(t)|,\{\inf_{|t|\le b_{n}^{-1}}|f_{\nu}^{{\rm ft}}(t)|\}^{4}\{\inf_{|t|\le b_{n}^{-1}}|f^{{\rm ft}}(t)|\}^{3}b_{n}^{2}\right\} }\right),
\end{align*}
where the last step uses Lemma \ref{lem:delta}, $\frac{\inf_{|t|\le b_{n}^{-1}}|\mu_{1}(t)|}{\inf_{|t|\le b_{n}^{-1}}|f_{\epsilon}^{{\rm ft}}(t)|\inf_{|t|\le b_{n}^{-1}}|f^{{\rm ft}}(t)|}\ge1$,
$\frac{\inf_{|t|\le b_{n}^{-1}}|\mu_{2}(t)|}{\inf_{|t|\le b_{n}^{-1}}|f_{\nu}^{{\rm ft}}(t)|\inf_{|t|\le b_{n}^{-1}}|f^{{\rm ft}}(t)|}\ge1$,
and $\sup_{|t|\le b_{n}^{-1}}|\mu_{3}(t)|=O(1)$ under Assumption
(1).
\end{proof}
\begin{lem}
\label{lem:pi2} Under Assumption (1),
\begin{align*}
\sup_{|t|\le b_{n}^{-1}}E[|\Pi_{1}(t)|^{2}] & =O\left(\frac{\{\inf_{|t|\le b_{n}^{-1}}|f^{{\rm ft}}(t)|\}^{-2}}{\min\left\{ \{\inf_{|t|\le b_{n}^{-1}}|f_{\epsilon}^{{\rm ft}}(t)|\}^{2},\{\inf_{|t|\le b_{n}^{-1}}|f_{\nu}^{{\rm ft}}(t)|\}^{2}b_{n}^{2}\right\} }\right),
\end{align*}
which implies that for $s=0,1$, 
\[
\sup_{|t|\le b_{n}^{-1}}E[|(Y_{1}-cW_{1})^{s}\Pi_{1}(t)|]=O\left(\frac{\{\inf_{|t|\le b_{n}^{-1}}|f^{{\rm ft}}(t)|\}^{-1}}{\min\left\{ \inf_{|t|\le b_{n}^{-1}}|f_{\epsilon}^{{\rm ft}}(t)|,\inf_{|t|\le b_{n}^{-1}}|f_{\nu}^{{\rm ft}}(t)|b_{n}\right\} }\right).
\]
\end{lem}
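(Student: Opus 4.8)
The plan is to bound the $L^{2}$-norm of $\Pi_{1}(t)$ directly, exploiting that $\Pi_{1}(t)$ is an affine functional of the three centered quantities $\delta_{1,1}(\cdot),\delta_{2,1}(\cdot),\delta_{3,1}(\cdot)$ with non-random coefficients built from $\mu_{1},\mu_{2},\mu_{3}$. Write
\[
\Pi_{1}(t)=-\frac{\delta_{1,1}(t)}{\mu_{1}(t)}+{\rm i}\int_{0}^{t}\left\{ -\frac{\mu_{3}(s)\delta_{2,1}(s)}{\mu_{2}^{2}(s)}+\frac{\delta_{3,1}(s)}{\mu_{2}(s)}\right\} ds=:A_{1}(t)+A_{2}(t)+A_{3}(t),
\]
where $A_{2}(t)$ and $A_{3}(t)$ are the two summands of the integral, and apply Minkowski's inequality, $\|\Pi_{1}(t)\|_{L^{2}}\le\|A_{1}(t)\|_{L^{2}}+\|A_{2}(t)\|_{L^{2}}+\|A_{3}(t)\|_{L^{2}}$, together with Minkowski's integral inequality to move the $L^{2}$-norm inside the $s$-integrals, for instance $\|A_{2}(t)\|_{L^{2}}\le\int_{0}^{|t|}\frac{|\mu_{3}(s)|}{|\mu_{2}(s)|^{2}}\|\delta_{2,1}(s)\|_{L^{2}}\,ds$, and similarly for $A_{3}$.

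Next I would insert elementary pointwise bounds. Since $|e^{{\rm i}tX_{1}}|=|e^{{\rm i}sW_{1}}|=1$, we have $E[|\delta_{1,1}(t)|^{2}]\le1$, $E[|\delta_{2,1}(s)|^{2}]\le1$, and $E[|\delta_{3,1}(s)|^{2}]\le E[X^{2}]<\infty$, the last by the moment conditions in Assumption (1). For the deterministic coefficients, $\mu_{1}(t)=f_{X}^{{\rm ft}}(t)=f^{{\rm ft}}(t)f_{\epsilon}^{{\rm ft}}(t)$ and $\mu_{2}(s)=f_{W}^{{\rm ft}}(s)=f^{{\rm ft}}(s)f_{\nu}^{{\rm ft}}(s)$, while $\sup_{s}|\mu_{3}(s)|\le E[|X|]=O(1)$ under Assumption (1); this last bound can be sharpened, using the classical-error structure $\epsilon\perp X^{*}$, $E[\epsilon\mid\nu]=0$, to the identity $\mu_{3}(s)=-{\rm i}\{f^{{\rm ft}}\}'(s)f_{\nu}^{{\rm ft}}(s)$ with $|\{f^{{\rm ft}}\}'(s)|\le E[|X^{*}|]$, which cancels one power of $f_{\nu}^{{\rm ft}}$ inside $\mu_{3}/\mu_{2}^{2}$. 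Infimizing all these denominators over $|s|\le|t|\le b_{n}^{-1}$, using $\int_{0}^{|t|}ds=|t|\le b_{n}^{-1}$ for the two integral terms (which is what produces the extra factor $b_{n}^{-2}$ after squaring), then squaring, adding, and recording the dominant of the competing contributions as the minimum appearing in the denominator, yields the claimed bound for $\sup_{|t|\le b_{n}^{-1}}E[|\Pi_{1}(t)|^{2}]$.

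The first-moment claim then follows from the variance bound by Cauchy--Schwarz: for $s\in\{0,1\}$,
\[
E\big[|(Y_{1}-cW_{1})^{s}\Pi_{1}(t)|\big]\le E[|Y_{1}-cW_{1}|^{2s}]^{1/2}\,E[|\Pi_{1}(t)|^{2}]^{1/2},
\]
where $E[|Y-cW|^{2}]<\infty$ by Assumption (1) (as $E[Y^{2}]<\infty$ and $E[W^{2}]\le2E[\{X^{*}\}^{2}]+2E[\nu^{2}]<\infty$), and using $\sqrt{\min\{a^{2},b^{2}\}}=\min\{a,b\}$ halves every exponent to give the stated form. I expect the main obstacle to be the integral term $A_{2}(t)$ and its coefficient $|\mu_{3}(s)|/|\mu_{2}(s)|^{2}$: one must track carefully which powers of $\inf_{|t|\le b_{n}^{-1}}|f^{{\rm ft}}(t)|$ and $\inf_{|t|\le b_{n}^{-1}}|f_{\nu}^{{\rm ft}}(t)|$ actually survive after the cancellation $\mu_{3}=-{\rm i}\{f^{{\rm ft}}\}'f_{\nu}^{{\rm ft}}$ and after integrating over the $b_{n}^{-1}$-long interval, since it is this term (not $A_{1}$ or $A_{3}$) that determines the precise shape of the minimum in the denominator; the crude bound $\sup_{|s|\le b_{n}^{-1}}|\mu_{3}(s)|=O(1)$ already used in the proof of Lemma \ref{lem:pi1} is equally available here.
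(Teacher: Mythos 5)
You follow essentially the same route as the paper's own proof: split $\Pi_{1}(t)$ into the $\delta_{1,1}/\mu_{1}$ piece and the two integral pieces, bound each integral by its length $|t|\le b_{n}^{-1}$ times a uniform bound on the integrand, use second-moment bounds on $\delta_{\iota,1}$, the inequalities $\inf_{|t|\le b_{n}^{-1}}|\mu_{1}(t)|\ge\inf_{|t|\le b_{n}^{-1}}|f_{\epsilon}^{{\rm ft}}(t)|\inf_{|t|\le b_{n}^{-1}}|f^{{\rm ft}}(t)|$ and $\inf_{|t|\le b_{n}^{-1}}|\mu_{2}(t)|\ge\inf_{|t|\le b_{n}^{-1}}|f_{\nu}^{{\rm ft}}(t)|\inf_{|t|\le b_{n}^{-1}}|f^{{\rm ft}}(t)|$ together with $\sup_{t}|\mu_{3}(t)|=O(1)$, and then get the first-moment claim by Cauchy--Schwarz and $E[(Y-cW)^{2}]<\infty$. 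Your Minkowski-integral packaging and the pointwise bounds $E[|\delta_{\iota,1}(t)|^{2}]\le C$ (in place of the paper's $E[\sup_{|t|\le b_{n}^{-1}}|\delta_{\iota,1}(t)|^{2}]<\infty$) are only cosmetic differences.

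The obstacle you flag at the end is, however, real, and ``yields the claimed bound'' does not dispose of it. An honest execution of your plan gives for the $A_{2}$ term $E[|A_{2}(t)|^{2}]\lesssim b_{n}^{-2}\{\sup_{|s|\le b_{n}^{-1}}|\mu_{3}(s)|\}^{2}\{\inf_{|s|\le b_{n}^{-1}}|\mu_{2}(s)|\}^{-4}$, i.e.\ $b_{n}^{-2}\{\inf|f_{\nu}^{{\rm ft}}|\}^{-4}\{\inf|f^{{\rm ft}}|\}^{-4}$ with the crude bound on $\mu_{3}$, and still $b_{n}^{-2}\{\inf|f_{\nu}^{{\rm ft}}|\}^{-2}\{\inf|f^{{\rm ft}}|\}^{-4}$ after your cancellation $\mu_{3}(s)=-{\rm i}\{f^{{\rm ft}}\}^{\prime}(s)f_{\nu}^{{\rm ft}}(s)$, since the log-derivative $\{f^{{\rm ft}}\}^{\prime}/f^{{\rm ft}}$ is not assumed bounded (combining $A_{2}$ and $A_{3}$ into $e^{{\rm i}sW_{1}}\{X_{1}+{\rm i}\{f^{{\rm ft}}\}^{\prime}(s)/f^{{\rm ft}}(s)\}/\mu_{2}(s)$ does not help either). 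Both versions exceed the second branch of the stated minimum, $b_{n}^{-2}\{\inf|f_{\nu}^{{\rm ft}}|\}^{-2}\{\inf|f^{{\rm ft}}|\}^{-2}$, by at least $\{\inf|f^{{\rm ft}}|\}^{-2}$, so your argument, like the paper's, does not literally deliver the displayed exponents: the paper's own proof squares $b_{n}^{-1}\sup|\mu_{3}|\sup|\delta_{2,1}|/\{\inf|\mu_{2}|\}^{2}$ but then records only $\{\inf|\mu_{2}|\}^{2}$ in the denominator, silently shedding a factor $\{\inf|\mu_{2}|\}^{2}$, which is exactly where the stated rate comes from. The discrepancy is harmless downstream: the weaker, honestly obtained bound, with $\{\inf|f_{\nu}^{{\rm ft}}|\}^{4}\{\inf|f^{{\rm ft}}|\}^{2}b_{n}^{2}$ replacing $\{\inf|f_{\nu}^{{\rm ft}}|\}^{2}b_{n}^{2}$ in the minimum (mirroring Lemma \ref{lem:pi1}), still yields $R_{n}=o_{p}(n^{-1/2})$ and $E[|p_{n}(a_{j},a_{k},a_{l})|^{2}]=o(n)$ under Assumption (4) because all the infima are at most one. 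To make your write-up airtight you should either prove that weaker statement and check it suffices where the lemma is invoked, or supply an explicit additional assumption (e.g.\ a bounded log-derivative of $f^{{\rm ft}}$) justifying the sharper exponent.
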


\begin{proof}
The second statement follows by the first statement, the Cauchy-Schwartz
inequality and Assumption (1). The conclusion then follows by
\begin{align*}
 & \sup_{|t|\le b_{n}^{-1}}E[|\Pi_{1}(t)|^{2}]\le E\left[\sup_{|t|\le b_{n}^{-1}}\left|-\frac{\delta_{1,1}(t)}{\mu_{1}(t)}+{\rm i}\int_{0}^{t}\left\{ -\frac{\mu_{3}(s)\delta_{2,1}(s)}{\mu_{2}^{2}(s)}+\frac{\delta_{3,1}(s)}{\mu_{2}(s)}\right\} ds\right|^{2}\right]\\
\le & E\left[\left(\frac{\sup_{|t|\le b_{n}^{-1}}|\delta_{1,1}(t)|}{\inf_{|t|\le b_{n}^{-1}}|\mu_{1}(t)|}+b_{n}^{-1}\left\{ \frac{\sup_{|t|\le b_{n}^{-1}}|\mu_{3}(t)|\sup_{|t|\le b_{n}^{-1}}|\delta_{2,1}(t)|}{\{\inf_{|t|\le b_{n}^{-1}}|\mu_{2}(t)|\}^{2}}+\frac{\sup_{|t|\le b_{n}^{-1}}|\delta_{3,1}(t)|}{\inf_{|t|\le b_{n}^{-1}}|\mu_{2}(t)|}\right\} \right)^{2}\right]\\
= & O\left(\frac{E[\sup_{|t|\le b_{n}^{-1}}|\delta_{1,1}(t)|^{2}]}{\{\inf_{|t|\le b_{n}^{-1}}|\mu_{1}(t)|\}^{2}}+\frac{\{\sup_{|t|\le b_{n}^{-1}}|\mu_{3}(t)|\}^{2}E[\sup_{|t|\le b_{n}^{-1}}|\delta_{2,1}(t)|^{2}]}{b_{n}^{2}\{\inf_{|t|\le b_{n}^{-1}}|\mu_{2}(t)|\}^{2}}+\frac{E[\sup_{|t|\le b_{n}^{-1}}|\delta_{3,1}(t)|^{2}]}{b_{n}^{2}\{\inf_{|t|\le b_{n}^{-1}}|\mu_{2}(t)|\}^{2}}\right)\\
= & O\left(\frac{\{\inf_{|t|\le b_{n}^{-1}}|f^{{\rm ft}}(t)|\}^{-2}}{\min\left\{ \{\inf_{|t|\le b_{n}^{-1}}|f_{\epsilon}^{{\rm ft}}(t)|\}^{2},\{\inf_{|t|\le b_{n}^{-1}}|f_{\nu}^{{\rm ft}}(t)|\}^{2}b_{n}^{2}\right\} }\right),
\end{align*}
where the last step uses $\frac{\inf_{|t|\le b_{n}^{-1}}|\mu_{1}(t)|}{\inf_{|t|\le b_{n}^{-1}}|f_{\epsilon}^{{\rm ft}}(t)|\inf_{|t|\le b_{n}^{-1}}|f^{{\rm ft}}(t)|}\ge1$,
$\frac{\inf_{|t|\le b_{n}^{-1}}|\mu_{2}(t)|}{\inf_{|t|\le b_{n}^{-1}}|f_{\nu}^{{\rm ft}}(t)|\inf_{|t|\le b_{n}^{-1}}|f^{{\rm ft}}(t)|}\ge1$,
$\sup_{|t|\le b_{n}^{-1}}|\mu_{3}(t)|=O(1)$ and $E[\sup_{|t|\le b_{n}^{-1}}|\delta_{\iota,1}(t)|^{2}]<\infty$
for $\iota=1,2,3$ under Assumption (1).
\end{proof}
\begin{lem}
\label{lem:bias1} Under Assumptions (1) and (3), for $s,k=0,1$,
\[
E\left[(Y-W)^{s}\mathbb{K}^{(k)}\left(\frac{x-X}{b_{n}}\right)\right]=E\left[(Y-W)^{s}K^{(k)}\left(\frac{x-X^{*}}{b_{n}}\right)\right].
\]
\end{lem}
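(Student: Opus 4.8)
The plan is to pass to Fourier representations, where the defining property of the deconvolution kernel makes the asserted identity a one‑line cancellation: averaging $\mathbb{K}^{(k)}(\cdot)$ over the contaminated regressor $X=X^{*}+\epsilon$ reintroduces precisely the factor $f_{\epsilon}^{\mathrm{ft}}$ that $\mathbb{K}$ divides out, leaving the ordinary‑kernel average against the latent $X^{*}$. Concretely, I would first use $\mathbb{K}^{(k)}(u)=\frac{1}{2\pi}\int(-\mathrm{i}t)^{k}e^{-\mathrm{i}tu}K^{\mathrm{ft}}(t)/f_{\epsilon}^{\mathrm{ft}}(t/b_{n})\,dt$ to write
\[
(Y-W)^{s}\,\mathbb{K}^{(k)}\!\left(\frac{x-X}{b_{n}}\right)=\frac{1}{2\pi}\int(-\mathrm{i}t)^{k}e^{-\mathrm{i}tx/b_{n}}\,(Y-W)^{s}e^{\mathrm{i}tX/b_{n}}\,\frac{K^{\mathrm{ft}}(t)}{f_{\epsilon}^{\mathrm{ft}}(t/b_{n})}\,dt,
\]
take expectations, and swap $E[\cdot]$ with $\int\cdot\,dt$. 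This interchange is licensed by Fubini's theorem: $K^{\mathrm{ft}}$ is supported on $[-1,1]$ and bounded (Assumption (3)), $f_{\epsilon}^{\mathrm{ft}}$ is continuous and non‑vanishing so that $\inf_{|t|\le b_{n}^{-1}}|f_{\epsilon}^{\mathrm{ft}}(t)|>0$ (Assumption (1)), and $E[|Y-W|]<\infty$ because $E[Y^{2}]<\infty$, $E[|X^{*}|^{2+\eta}]<\infty$ and $E[|\nu|^{2+\eta}]<\infty$. The matter thus reduces to evaluating $E[(Y-W)^{s}e^{\mathrm{i}tX/b_{n}}]$.

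That moment computation is the heart of the argument. For $s=0$, independence of $\epsilon$ from $X^{*}$ gives $E[e^{\mathrm{i}tX/b_{n}}]=f^{\mathrm{ft}}(t/b_{n})\,f_{\epsilon}^{\mathrm{ft}}(t/b_{n})$. For $s=1$, I would condition on $(X^{*},\epsilon)$ and invoke $E[Y-W\mid X^{*},\epsilon]=E[Y-W\mid X^{*}]=g(X^{*})-X^{*}$; this holds under Assumption (1) since $E[Y\mid X^{*},\epsilon]=E[Y\mid X^{*}]$ and, as $(\epsilon,\nu)\perp X^{*}$ and $E[\nu\mid\epsilon]=0$, $E[\nu\mid X^{*},\epsilon]=0$. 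Using once more that $\epsilon\perp X^{*}$,
\[
E\!\left[(Y-W)e^{\mathrm{i}tX/b_{n}}\right]=E\!\left[\{g(X^{*})-X^{*}\}e^{\mathrm{i}tX^{*}/b_{n}}\right]E\!\left[e^{\mathrm{i}t\epsilon/b_{n}}\right]=h_{1}^{\mathrm{ft}}(t/b_{n})\,f_{\epsilon}^{\mathrm{ft}}(t/b_{n}),
\]
where $h_{1}(x)=\{g(x)-x\}f(x)$, and $h_{1}\in L^{1}$ by (conditional) Jensen so $h_{1}^{\mathrm{ft}}$ is well defined. In both cases the factor $f_{\epsilon}^{\mathrm{ft}}(t/b_{n})$ produced cancels the denominator inside $\mathbb{K}$, leaving
\[
E\!\left[(Y-W)^{s}\mathbb{K}^{(k)}\!\left(\frac{x-X}{b_{n}}\right)\right]=\frac{1}{2\pi}\int(-\mathrm{i}t)^{k}e^{-\mathrm{i}tx/b_{n}}\,\ell_{s}^{\mathrm{ft}}(t/b_{n})\,K^{\mathrm{ft}}(t)\,dt,\qquad\ell_{0}=f,\ \ \ell_{1}=h_{1}.
\]

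To close, I would run the same Fubini argument in reverse on the right‑hand side of the claim: by the tower property (for $s=1$) together with $E[Y-W\mid X^{*}]=g(X^{*})-X^{*}$, we have $E[(Y-W)^{s}K^{(k)}((x-X^{*})/b_{n})]=\int\ell_{s}(x^{*})K^{(k)}((x-x^{*})/b_{n})\,dx^{*}$; inserting $K^{(k)}(u)=\frac{1}{2\pi}\int(-\mathrm{i}t)^{k}e^{-\mathrm{i}tu}K^{\mathrm{ft}}(t)\,dt$ and again interchanging integrals (valid since $K^{\mathrm{ft}}$ is bounded with compact support and $\ell_{s}\in L^{1}$) yields exactly the Fourier integral displayed above, establishing the identity for each $s,k\in\{0,1\}$; the argument goes through verbatim with $Y-W$ replaced by $Y-cW$ and $h_{1}$ by $h_{c}$. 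I expect the only non‑mechanical points to be the Fubini/integrability bookkeeping and the verification that the mean‑independence hypotheses of Assumption (1) genuinely collapse $E[Y-W\mid X^{*},\epsilon]$ to $E[Y-W\mid X^{*}]$; granting those, the lemma is nothing more than the cancellation of $f_{\epsilon}^{\mathrm{ft}}$ against $1/f_{\epsilon}^{\mathrm{ft}}$.
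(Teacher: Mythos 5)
Your proposal is correct and follows essentially the same route as the paper: write $\mathbb{K}^{(k)}$ via its Fourier representation, use the independence of $\epsilon$ from $X^{*}$ together with the mean-independence conditions of Assumption (1) to factor out $f_{\epsilon}^{\mathrm{ft}}(t/b_{n})$ and cancel it against the denominator, and identify the remaining integral with $K^{(k)}$ through $\{K^{(k)}\}^{\mathrm{ft}}(t)=(-\mathrm{i}t)^{k}K^{\mathrm{ft}}(t)$. The only difference is presentational (you match both sides through the intermediate expression $\ell_{s}^{\mathrm{ft}}(t/b_{n})$ and spell out the Fubini and conditional-expectation steps that the paper compresses into ``independence between $\epsilon$ and $Y$''), which is a welcome clarification rather than a new argument.
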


\begin{proof}
For $s,k=0,1$,
\begin{align*}
E\left[(Y-W)^{s}\mathbb{K}^{(k)}\left(\frac{x-X}{b_{n}}\right)\right] & =E\left[(Y-W)^{s}\left\{ \frac{1}{2\pi}\int e^{-\mathrm{i}t\left(\frac{x-X}{b_{n}}\right)}\frac{K^{\mathrm{ft}}(t)(-{\rm i}t)^{k}}{f_{\epsilon}^{\mathrm{ft}}(t/b_{n})}dt\right\} \right]\\
 & =E\left[(Y-W)^{s}\left\{ \frac{1}{2\pi}\int e^{-\mathrm{i}t\left(\frac{x-X^{*}}{b_{n}}\right)}K^{\mathrm{ft}}(t)(-{\rm i}t)^{k}dt\right\} \right]\\
 & =E\left[(Y-W)^{s}K^{(k)}\left(\frac{x-X^{*}}{b_{n}}\right)\right],
\end{align*}
where the second step follows from the independence between $\epsilon$
and $Y$, and the last step follows from the fact $\{K^{(k)}\}^{{\rm ft}}(t)=K^{\mathrm{ft}}(t)(-{\rm i}t)^{k}$
for $k=0,1$.
\end{proof}
\begin{lem}
\label{lem:bias2} Under Assumptions (2) and (3), for $k=0,1$,
\begin{align*}
b_{n}^{-(k+1)}E\left[K^{(k)}\left(\frac{x-X^{*}}{b_{n}}\right)\right] & =f^{(k)}(x)+\frac{b_{n}^{\alpha-k}}{(\alpha-k)!}\int K(u)\{f^{(\alpha)}(x+b_{n}\bar{u}_{f})-f^{(\alpha)}(x)\}u^{\alpha-k}du,\\
b_{n}^{-(k+1)}E\left[(Y-cW)K^{(k)}\left(\frac{x-X^{*}}{b_{n}}\right)\right] & =h_{c}^{(k)}(x)+\frac{b_{n}^{\alpha-k}}{(\alpha-k)!}\int K(u)\{h_{c}^{(\alpha)}(x+b_{n}\bar{u}_{h})-h_{c}^{(\alpha)}(x)\}u^{\alpha-k}du,
\end{align*}
for some $\bar{u}_{f}$ and $\bar{u}_{h}$ such that $\max\{|\bar{u}_{f}|,|\bar{u}_{h}|\}\le|u|$. 
\end{lem}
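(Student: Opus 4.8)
The plan is to reduce each expectation to an ordinary convolution of the kernel with the smooth function $f$ (respectively $h_c$), then to move the $k$ derivatives off the kernel and onto that smooth function by repeated integration by parts, and finally to apply a Taylor expansion together with the vanishing-moment conditions on $K$ from Assumption (3). First I would note that $K^{(k)}((x-X^*)/b_n)$ is a function of $X^*$ alone, so by the tower property, the definition $g(\cdot)=E[Y\mid X^*=\cdot]$, and the facts $\nu\perp X^*$ and $E[\nu]=0$ from Assumption (1) (which give $E[Y-cW\mid X^*]=g(X^*)-cX^*$), together with $h_c(v)=\{g(v)-cv\}f(v)$,
\[
E\!\left[(Y-cW)\,K^{(k)}\!\left(\tfrac{x-X^*}{b_n}\right)\right]=\int K^{(k)}\!\left(\tfrac{x-v}{b_n}\right)h_c(v)\,dv,
\]
and the parallel identity with $(Y-cW)$ dropped and $h_c$ replaced by $f$; both integrals are finite because $K^{(k)}$ is bounded and $E[|Y-cW|]<\infty$ under Assumption (1). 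The change of variable $u=(x-v)/b_n$ turns the right-hand sides into $b_n\int K^{(k)}(u)\,h_c(x-b_nu)\,du$ and $b_n\int K^{(k)}(u)\,f(x-b_nu)\,du$.

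Next I would integrate by parts $k$ times, transferring the derivatives from the kernel to the smooth factor: since $\frac{d}{du}f(x-b_nu)=-b_nf'(x-b_nu)$ (and similarly for $h_c$), each step yields a factor $b_n$, and the boundary terms vanish because $K^{(j)}$, being the inverse Fourier transform of $t\mapsto(-\mathrm{i}t)^jK^{\mathrm{ft}}(t)$, which is integrable ($K^{\mathrm{ft}}$ is bounded and supported on $[-1,1]$ by Assumption (3)), is continuous and vanishes at infinity, while $f$ and $h_c$ are bounded by Assumption (2). This produces
\[
b_n^{-(k+1)}E\!\left[K^{(k)}\!\left(\tfrac{x-X^*}{b_n}\right)\right]=\int K(u)\,f^{(k)}(x-b_nu)\,du,
\]
and the analogous identity for $b_n^{-(k+1)}E[(Y-cW)K^{(k)}((x-X^*)/b_n)]$ with $f^{(k)}$ replaced by $h_c^{(k)}$.

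Finally I would expand $f^{(k)}(x-b_nu)$ (resp. $h_c^{(k)}(x-b_nu)$) by Taylor's theorem with Lagrange remainder about $x$ to order $\alpha-k$, which is legitimate since the derivatives through order $\alpha$ exist and are bounded under Assumption (2). Integrating termwise against $K$ and invoking $\int K=1$ and $\int u^lK(u)\,du=0$ for $1\le l<\alpha$ from Assumption (3) kills every intermediate term and leaves $f^{(k)}(x)$ (resp. $h_c^{(k)}(x)$) plus the remainder $\frac{(-b_n)^{\alpha-k}}{(\alpha-k)!}\int u^{\alpha-k}K(u)\,f^{(\alpha)}(x+b_n\bar{u}_f)\,du$ with $|\bar{u}_f|\le|u|$; a substitution $u\mapsto-u$ combined with the symmetry of $K$ absorbs the factor $(-1)^{\alpha-k}$. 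For $k\ge1$ the moment $\int u^{\alpha-k}K(u)\,du$ vanishes, so $f^{(\alpha)}(x)$ may be subtracted inside the integral at no cost, giving exactly the displayed form; for $k=0$ the same rewriting is used, with the understanding that the resulting extra term $\tfrac{b_n^\alpha}{\alpha!}f^{(\alpha)}(x)\int u^\alpha K(u)\,du$ is of order $b_n^\alpha$ and is harmless for the way the lemma is used (where only an $O(b_n^\alpha)$ bound on $B_n-\theta_c$ is needed). The only genuinely delicate points are the vanishing of the boundary terms in the iterated integration by parts for a band-limited, hence non-compactly-supported, kernel, and the bookkeeping of the $k=0$ remainder; the rest is routine.
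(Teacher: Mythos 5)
Your proof is correct and follows essentially the same route as the paper's: rewrite the expectation as a convolution with $h_c$ (resp.\ $f$) via $E[Y-cW\mid X^{*}]$, move the $k$ derivatives off the kernel by integration by parts (using the symmetry of $K$ and the vanishing of the band-limited kernel at infinity), change variables, and Taylor-expand with Lagrange remainder, with the intermediate terms killed by the moment conditions of Assumption (3); the paper writes out only the $k=1$ case and calls the rest similar. Your additional observation that for $k=0$ the displayed identity holds only up to the extra term $\frac{b_n^{\alpha}}{\alpha!}f^{(\alpha)}(x)\int u^{\alpha}K(u)\,du$ (which does not vanish since $\int u^{\alpha}K(u)\,du\neq0$) is accurate, and, as you note, harmless because the $k=0$ case is invoked only to establish $B_n-\theta_c=O(b_n^{\alpha})$.
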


\begin{proof}
Since the arguments are similar, we focus on the second statement
when $k=1$, which follows from
\begin{align*}
 & b_{n}^{-2}E\left[(Y-cW)K^{\prime}\left(\frac{x-X^{*}}{b_{n}}\right)\right]=b_{n}^{-2}\int h_{c}(x^{*})K^{\prime}\left(\frac{x-x^{*}}{b_{n}}\right)dx^{*}\\
= & -b_{n}^{-1}\int h_{c}(x^{*})dK\left(\frac{x-x^{*}}{b_{n}}\right)=b_{n}^{-1}\int K\left(\frac{x^{*}-x}{b_{n}}\right)h_{c}^{\prime}(x^{*})dx^{*}\\
= & \int K(u)\underbrace{h_{c}^{\prime}(x+b_{n}u)}_{\sum_{l=0}^{\alpha-1}\frac{h_{c}^{(l+1)}(x)}{l!}b_{n}^{l}u^{l}+\frac{b_{n}^{\alpha-1}}{(\alpha-1)!}\{h_{c}^{(\alpha)}(x+b_{n}\bar{u}_{f})-h_{c}^{(\alpha)}(x)\}u^{\alpha-1}}du\\
= & h_{c}^{\prime}(x)+\frac{b_{n}^{\alpha-1}}{(\alpha-1)!}\int K(u)\{h_{c}^{(\alpha)}(x+b_{n}\bar{u}_{f})-h_{c}^{(\alpha)}(x)\}u^{\alpha-1}du,
\end{align*}
where the third step follows from the integration by parts and the
symmetry of the kernel function $K$, the fourth step follows from
the change of variables $u=(x^{*}-x)/b_{n}$, and the last step follows
from the property of the kernel function $K$ as in Assumption (3).
\end{proof}
\newpage{}

\newpage{}

\begin{figure}[h]
	\caption{\label{fig:simulation}Monte Carlo Simulation Results}
	\begin{centering}
		\includegraphics[scale=0.8]{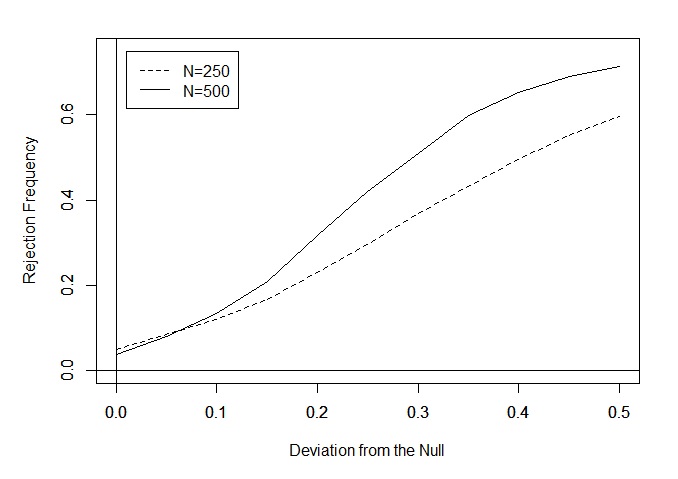}
		\par\end{centering}
	\centering{}%
	\begin{minipage}[t]{0.83\columnwidth}%
		Notes: The vertical axis measures the rejection frequency for $H_{0}:\theta_{1}\ge0$
		with the nominal size of 0.05. The horizontal axis measures the deviation
		$\delta\in[0.0,0.5]$ from the null hypothesis $H_{0}:\theta_{1}\ge0$.
		The dashed (respectively, solid) line indicates the results with $N=250$
		(respectively, $500$).%
	\end{minipage}
\end{figure}

\newpage{}

\begin{table}[h]
	\caption{\label{tab:summary_statistics}Summary Statistics of U.S. PSID for
		2013, 2015, 2017, and 2019. }
	\vspace{0.35cm}
	\begin{tabular}{ccccccc}
		\hline\hline
		Year & Log Income & Log Consumption && $X$ & $W$ & $Y$ \\
		\cline{1-3}\cline{5-7}
		2013 &10.682 &10.501 && 0.167 & 0.145 & 0.033\\
		&(1.086)&(0.762)&&(0.730)&(0.772)&(0.429)\\
		2015 &10.766 &10.537\\
		&(1.061)&(0.739)\\
		2017 &10.857 &10.578\\
		&(1.021)&(0.696)\\
		2019 &10.920 &10.640\\
		&(1.042)&(0.708)& \multicolumn{4}{r}{Observations = 5976}\\
		\hline\hline
	\end{tabular}
	\begin{centering}
		\bigskip{}
		\par\end{centering}
	\centering{}%
	\begin{minipage}[t]{0.83\columnwidth}%
		Notes: Displayed values are the sample means. Parentheses enclose
		sample standard deviations.%
	\end{minipage}
\end{table}


\begin{thebibliography}{10}
\bibitem{key-4}Adusumilli, K., D. Kurisu, T. Otsu and Y.J. Whang
(2020) Inference on distribution functions under measurement error,
\textit{Journal of Econometrics}, 215, 131-164.

\bibitem{key-37}Adusumilli, K. and T. Otsu (2018) Nonparametric instrumental
regression with errors in variables, \textit{Econometric Theory},
34, 1256-1280.

\bibitem{key-29}Ahn, H. and J. L. Powell (1993) Semiparametric estimation
of censored selection models with a nonparametric selection mechanism,
\textit{Journal of Econometrics}, 58, 3-29.

\bibitem{key-47}Arellano, M., R. Blundell and S. Bonhomme (2017)
Earnings and consumption dynamics: a nonlinear panel data framework,
\textit{Econometrica}, 85, 693-734.

\bibitem{key-10}Bissantz N., L. Dümbgen, H. Holzmann and A. Munk
(2007) Non-parametric confidence bands in deconvolution density estimation,
\textit{Journal of the Royal Statistical Society}, B 69, 483-506.

\bibitem{key-11}Bissantz, N. and H. Holzmann (2008) Statistical inference
for inverse problems, \textit{Inverse Problems}, 24, Article 034009.

\bibitem{key-48}Blundell, R., L. Pistaferri and I. Preston (2008)
Consumption inequality and partial insurance, \textit{American Economic
Review}, 98, 1887-1921.

\bibitem{key-27}Bonhomme, S. and J.M. Robin (2010) Generalized non-parametric
deconvolution with an application to earnings dynamics, \textit{The
Review of Economic Studies}, 77, 491-533.

\bibitem{key-3}Chen, X., H. Hong and D. Nekipelov (2011) Nonlinear
models of measurement errors, \textit{Journal of Economic Literature},
49, 901-937.

\bibitem{carroll2009}Carroll, C.D. (2009) Precautionary saving and
the marginal propensity to consume out of permanent income, \textit{Journal
of Monetary Economics}, 56, 780-790.

\bibitem{key-6}Carroll, R.J. and P. Hall (1988) Optimal rates of
convergence for deconvolving a density, \textit{Journal of the American
Statistical Association}, 83, 1184-1186.

\bibitem{key-1}Carroll, R.J., D. Ruppert, L.A. Stefanski and C.M.
Crainiceanu (2006) \textit{Measurement error in nonlinear models:
a modern perspective}, CRC press.

\bibitem{key-28}Comte, F. and J. Kappus (2015) Density deconvolution
from repeated measurements without symmetry assumption on the errors,
\textit{Journal of Multivariate Analysis}, 140, 31-46.

\bibitem{key-23}Comte, F. and C. Lacour (2011) Data-driven density
estimation in the presence of additive noise with unknown distribution,
\textit{Journal of the Royal Statistical Society}, B 73, 601-627.

\bibitem{key-2}Delaigle, A. , J. Fan and R.J. Carroll (2009) A design-adaptive
local polynomial estimator for the errors-in-variables problem, \textit{Journal
of the American Statistical Association}, 104, 348-359.

\bibitem{key-21}Delaigle, A., P. Hall and F. Jamshidi (2015) Confidence
bands in non-parametric errors-in-variables regression, \textit{Journal
of the Royal Statistical Society}, B 77, 149-169.

\bibitem{key-38}Delaigle, A., P. Hall and A. Meister (2008) On deconvolution
with repeated measurements, \textit{The Annals of Statistics}, 36,
665-685.

\bibitem{key-39}Delaigle, A. and A. Meister (2007) Nonparametric
regression estimation in the heteroscedastic errors-in-variables problem,
\textit{Journal of the American Statistical Association}, 102, 1416-1426.

\bibitem{key-16}Diggle P.J. and P. Hall (1993) A Fourier approach
to nonparametric deconvolution of a density estimate, \textit{Journal
of the Royal Statistical Society}, B 55, 523-531.

\bibitem{key-46}Dong, H., T. Otsu and L.N. Taylor (2021) Average
derivative estimation under measurement error, \textit{Econometric
Theory}, 37,1004-1033.

\bibitem{key-1}Dong, H., T. Otsu and L.N. Taylor (2022) Estimation
of varying coefficient models with measurement error, \textit{Journal
of Econometrics}, 230, 388-415\textit{.}

\bibitem{key-19}Efromovich, S. (1997) Density estimation for the
case of supersmooth measurement error, \textit{Journal of the American
Statistical Association}, 92, 526-535.

\bibitem{key-8}Fan, J. (1991a) On the optimal rates of convergence
for nonparametric deconvolution problems, \textit{The Annals of Statistics},
19, 1257-1272.

\bibitem{key-9}Fan, J. (1991b) Asymptotic normality for deconvolution
kernel density estimators, \emph{Sankhy\={a}}\textit{ A}, 53, 97-110.

\bibitem{key-40}Fan, J. and E. Masry (1992) Multivariate regression
estimation with errors-in-variables: asymptotic normality for mixing
processes, \textit{Journal of Multivariate Analysis}, 43, 237-271.

\bibitem{key-41}Fan, J. and Y.K. Truong (1993) Nonparametric regression
with errors in variables, \textit{The Annals of Statistics}, 21, 1990-1925.

\bibitem{key-45}Fan, Y. (1995) Average derivative estimation with
errors-in-variables, \textit{Journal of Nonparametric Statistics},
4, 395-407.

\bibitem{key-42}Guvenen, F., S. Ozkan and J. Song (2014) The nature
of countercyclical income risk, \textit{Journal of Political Economy},
122, 621-660.

\bibitem{key-43}Guvenen, F., F. Karahan, S. Ozkan and J. Song (2021)
What do data on millions of US workers reveal about lifecycle earnings
dynamics, \textit{Econometrica}, 89,2303-2339.

\bibitem{key-2}Horowitz, J.L. (2009) \textit{Semiparametric and nonparametric
methods in econometrics}, Springer.

\bibitem{key-17}Horowitz, J.L. and M. Markatou (1996) Semiparametric
estimation of regression models for panel data, \textit{The Review
of Economic Studies}, 63, 145-168.

\bibitem{key-44}Hu, Y., R. Moffitt and Y. Sasaki (2019) Semiparametric
estimation of the canonical permanent-transitory model of earnings
dynamics, \textit{Quantitative Economics}, 10, 1495-1536.

\bibitem{key-30}Hu, Y. and Y. Sasaki (2015) Closed-form estimation
of nonparametric models with non-classical measurement errors, \textit{Journal
of Econometrics}, 185, 392-408.

\bibitem{key-22}Johannes, J. (2009) Deconvolution with unknown error
distribution, \textit{The Annals of Statistics}, 37, 2301-2323.

\bibitem{key-24}Kato, K. and Y. Sasaki (2018) Uniform confidence
bands in deconvolution with unknown error distribution, \textit{Journal
of Econometrics}, 207, 129-161.

\bibitem{key-36}Kato, K. and Y. Sasaki (2019) Uniform confidence
bands for nonparametric errors-in-variables regression, \textit{Journal
of Econometrics}, 213, 516-555.

\bibitem{key-25}Kato, K., Y. Sasaki and T. Ura (2021) Robust Inference
in Deconvolution, \textit{Quantitative Economics}, 21, 109-142.

\bibitem{key-26}Kotlarski, I. (1967) On characterizing the gamma
and the normal distribution, \textit{Pacific Journal of Mathematics},
20, 69-76.

\bibitem{key-5}Kurisu, D. and T. Otsu (2022) On the uniform convergence
of deconvolution estimators from repeated measurements, \textit{Econometric
Theory, 38, 172-193.}

\bibitem{key-35}Li, T. (2002) Robust and consistent estimation of
nonlinear errors-in-variables models, \textit{Journal of Econometrics},
110, 1-26.

\bibitem{key-20}Li, T. and Q. Vuong (1998) Nonparametric estimation
of the measurement error model using multiple indicators, \textit{Journal
of Multivariate Analysis}, 65, 139-165.

\bibitem{key-14}Lounici, K. and R. Nickl (2011) Global uniform risk
bounds for wavelet deconvolution estimators, \textit{The Annals of
Statistics}, 39, 201-231.

\bibitem{key-6}Meister, A. (2009) \emph{Deconvolution problems in
nonparametric statistics}, Springer.

\bibitem{key-18}Neumann, M.H. and O. Hössjer (1997) On the effect
of estimating the error density in nonparametric deconvolution, \textit{Journal
of Nonparametric Statistics}, 7, 307-330.

\bibitem{key-3}Powell, J.L., J.H. Stock and T.M. Stoker (1989) Semiparametric
estimation of index coefficients, \textit{Econometrica}, 57, 1403-1430.

\bibitem{key-32}Schennach, S.M. (2004) Nonparametric regression in
the presence of measurement error, \textit{Econometric Theory}, 20,
1046-1093.

\bibitem{key-34}Schennach, S.M. (2016) Recent advances in the measurement
error literature, \textit{Annual Review of Economics}, 8, 341-377.

\bibitem{key-1}Schennach, S.M. (2021) Measurement systems, \textit{Journal
of Economic Literature,} forthcoming. 

\bibitem{key-31}Schennach, S.M. and Y. Hu (2013) Nonparametric identification
and semiparametric estimation of classical measurement error models
without side information, \textit{Journal of the American Statistical
Association}, 108, 177-186.

\bibitem{key-4}Schennach, S.M., H. White and K. Chalak (2012) Local
indirect least squares and average marginal effects in nonseparable
structural systems, \textit{Journal of Econometrics}, 166, 282-302.

\bibitem{key-15}Schmidt-Hieber J., A. Munk and L. Dümbgen (2013)
Multiscale methods for shape constraints in deconvolution: confidence
statements for qualitative features, \textit{The Annals of Statistics},
41, 1299-1328.

\bibitem{key-7}Stefanski, L.A. and R.J. Carroll (1990) Deconvolving
kernel density estimators, \textit{Statistics}, 21, 169-184.

\bibitem{key-12}van Es, B. and S. Gugushvili (2008) Weak convergence
of the supremum distance for supersmooth kernel deconvolution, \textit{Statistics
\& Probability Letters}, 78, 2932-2938.
\end{thebibliography}
\end{document}